\theoremstyle{definition}
\newtheorem{definition}{Definition}
\newtheorem{theorem}{Theorem}
\newtheorem*{corollary}{Corollary}
\newtheorem{lemma}{Lemma}
\theoremstyle{remark}
\newtheorem*{remark}{Case $K=-\alpha=\beta$}
\newtheorem*{remarkk}{Remark}
\begin{document}
\title{Geodesics Near a Curvature Singularity of Stationary and Axially Symmetric Space-Times}

\author{Juan Carlos Del \'Aguila}
\email{jdelaguila@fis.cinvestav.mx}
\affiliation{Departamento de F\'isica, Centro de Investigaci\'on y de
  Estudios Avanzados del IPN, A.P. 14-740, 07000 Ciudad de M\'exico, M\'exico.}
\author{Tonatiuh Matos}
\email{tmatos@fis.cinvestav.mx}
 \altaffiliation{Part of the Instituto Avanzado de Cosmolog\'ia (IAC)
  collaboration http://www.iac.edu.mx/}
\affiliation{Departamento de F\'isica, Centro de Investigaci\'on y de
  Estudios Avanzados del IPN, A.P. 14-740, 07000 Ciudad de M\'exico,
  M\'exico.}

\begin{abstract}
In this work we study the local behavior of geodesics in the neighborhood of a curvature singularity contained in stationary and axially symmetric space-times. Apart from these properties, the metrics we shall focus on will also be required to admit a quadratic first integral for their geodesics. In particular, we search for the conditions on the geometry of the space-time for which null and time-like geodesics can reach the singularity. These conditions are determined by the equations of motion of a freely-falling particle. We also analyze the possible existence of geodesics that do not become incomplete when encountering the singularity in their path. The results are stated as criteria that depend on the inverse metric tensor along with conserved quantities such as energy and angular momentum. As an example, the derived criteria are applied to the Pleba\'nski-Demia\'nski class of space-times. Lastly, we propose a line element that describes a wormhole whose curvature singularities are, according to our results, inaccessible to causal geodesics. 
\end{abstract}

\date{Received: date / Accepted: date}

\maketitle

\date{\today}


\maketitle

\section{I. Introduction}
Space-time singularities are, to this day, one of the aspects of General Relativity (GR) which still hold several unanswered questions. Difficulties arise even from the supposedly simple task of exactly defining what a singularity is and whether a space-time is singular or not. Over the years, an useful approach to identifying singularities has been that of examining the curves of the space-time manifold, specifically geodesics, and its affine completeness, i.e. the ability to extend such curves to, either past or future, arbitrary values of its affine parameter. Based on this global property of the manifold, there is general agreement on considering that a geodesically incomplete space-time is, by all means, singular \cite{wald}. A very comprehensive review on this subject can be found in \cite{Senovilla1,Senovilla2}.  

Many widely known exact solutions to the Einstein equations contain a singularity, for instance, those who describe black holes. In fact, it is thought that once gravitational collapse takes place, the formation of a space-time singularity is unavoidable \cite{penrose}. In the case of black holes, its singularities are commonly associated to the divergence of curvature scalars in the space-time metric. This has led to the notion of the necessity of a quantum theory of gravity to adequately describe regions where said scalars approach to the Planck scale. Nevertheless, these particular singularities are physically accepted since they are hidden behind event horizons and hence, causally disconnected from the outer region of the black hole. On the other hand, space-times with unbounded curvature scalars, and that are not equipped with an event horizon (the so-called naked singularities), are often dismissed as non-physical and thus, considered as pathological. As examples of this type of space-times, the extremal Kerr metric ($a>m$) and the axially symmetric ring wormholes can be mentioned \cite{ringWH}.

While in black hole solutions the concepts of singularity and unbounded curvature scalars seem to be deeply related, recent papers have provided examples of causal geodesically complete wormhole space-times despite the presence of diverging curvature \cite{Olmo1}. Further study of such metrics has shown that even the curves of observers with bounded acceleration are complete \cite{Olmo2}, therefore extending the argumentation of the regularity of those space-times. Examples of the opposite, that is, bounded curvature with geodesic incompleteness, are also known. One of them is given by Misner in \cite{misner} and consists of a metric with the properties of a Taub-NUT vacuum space-time. Hence, it is clear that in regards to geodesic incompleteness and the divergence of curvature scalars, one does not imply the other. An analysis of these two features, as well as unbounded energy density, is discussed in \cite{Olmo3} for the case of space-times in a quadratic $f(R)$ gravity theory.

The wormhole metrics cited before contribute to consider cases in which the divergence of curvature, referred to in this paper as ``curvature singularity", is not to be seen as singular or badly behaved, so long as it does not induce pathological effects on the curves of physical observers. In fact, on the matter of general singular space-times, Clarke had previously introduced the modern perspective that singularities are not to be generally seen as obstacles to extend geodesics but as obstacles to evolve test fields \cite{Clarke}. For this purpose, a new concept analogous to global hyperbolicity is proposed, ultimately heading to the possibility that some apparent singularities do not represent a breakdown of cosmic censorship \cite{cc}. All of these results can lead to a more open acceptance of certain singularities, or at least, to not discard so easily space-times which contain them. 

Following on the idea that unbounded curvature does not imply geodesic incompleteness, and vice-versa, in this work we try to establish the necessary and sufficient conditions for which causal geodesics in a space-time containing diverging scalars can reach the curvature singularity within it. We also consider the possibility that if, unavoidably, said singularity is met by a given curve, completeness could still be possible. In this paper we present two theorems that serve mainly as criteria for the occurrence of these two particular behaviors. Here, we will consider four-dimensional, stationary and axially symmetric space-times, with some additional requirements that will be explicitly mentioned in section II. In that section some concepts and results that shall help us throughout the analysis will also be discussed. Afterward, we state the theorems and give their proofs. In section III we apply them to a physically relevant class of space-times, that of Pleba\'nski-Demia\'nski, and obtain some well-known results. Finally, based on the first derived theorem, we construct a metric in which causal geodesics are unable to touch its curvature singularity. Proving thus, with the aid of some additional arguments, its causal geodesic completeness despite the presence of unbounded curvature.  

\section{II. Curvature Singularities and Geodesics}

We begin by establishing various concepts that will allow us to deal with singularities throughout the paper. As mentioned in the Introduction we shall focus on the so-called curvature singularities. Let $u^\mu=\{u^0=t,u^1,u^2,u^3\}$ be Cartesian (or at least ``Cartesian-like'') coordinates on a (pseudo)-Riemannian manifold $M$. We will say a space-time $(M, g_{\mu\nu})$ contains a curvature singularity if any of its curvature scalars $R_X$ diverge at some coordinate values $u_0^\mu$. The scalars $R_X$ may be constructed from index contractions or from polynomial expressions of the Riemann or Ricci tensors. In the case of a $n$-dimensional manifold, the curvature scalars can be considered as a map $R_X:\mathbb{R}^n\rightarrow\mathbb{R}$. With this in mind, we make the following definitions for a 4-dimensional space-time.

\theoremstyle{definition}
\begin{definition}
Let $R_X:\mathbb{R}^4\rightarrow\mathbb{R}$ denote a curvature scalar of a given metric $g_{\mu\nu}$ that contains a curvature singularity labeled as $\sigma$. The singular curvature set is defined as $\sigma_X=\left\{(u^0,u^1,u^2,u^3)\in\mathbb{R}^4\mid 1/R_X(u^\mu)=0\right\}$.

\end{definition}

\begin{remarkk}
This definition is suitable to identify curvature singularities that arise as a consequence of a vanishing denominator of $R_X$, this is the case in some situations in GR. Nevertheless, it does not absolutely defines all possible curvature singularities that can exist, as some may escape this definition\footnote{For instance, consider a scalar that goes as $R_X\sim e^{u^1}$. In this case $1/R_X$ is never exactly zero, and so $\sigma_X=\{\varnothing\}$, despite the fact that the curvature is evidently unbounded.}.
\end{remarkk}

The coordinate system $\{u^\mu\}$ used in this past definition is referred to as Cartesian-like in the sense that there exist appropriate limits on the parameters of the space-time for which $\{u^1,u^2,u^3\}$ become regular Cartesian coordinates in Euclidean 3-space.

\theoremstyle{definition}
\begin{definition}
The singular curvature set will be said to be spatially compact/bounded/open, etc., if its subset $(u^1,u^2,u^3)\in\mathbb{R}^3$ is compact/bounded/open, etc.

\end{definition}

The singular curvature set of an asymptotically flat space-time is either empty or spatially bounded. In this paper we will treat only singularities whose $\sigma_X$ is spatially compact. 

Since by definition, a space-time is constituted of only regular points, a singularity does not properly belong to it. This implies that a neighborhood of the singular points cannot be defined in the usual topological sense. However, using an auxiliary manifold $\tilde{M}$, the neighborhood of a curvature singularity may be ultimately defined.

\theoremstyle{definition}
\begin{definition}
Let ($M$, $g_{\mu\nu}$) be a space-time that contains a curvature singularity $\sigma$. Also, let $\zeta:M\rightarrow\tilde{M}$ be a non-isometric embedding, being $\tilde{M}$ a manifold containing all the points of the set $\sigma_X$ and so, $M\subset\tilde{M}$, i.e. $M$ is a proper subset of $\tilde{M}$. Then, the neighborhood $N$ of the singularity is $N=\tilde{N}\cap M$, where $\tilde{N}$ is a neighborhood of $\sigma_X$ in $\tilde{M}$.

\end{definition}

Note that $\zeta$ must be a non-isometric embedding so that $\sigma_X$ is not singular on $\tilde{M}$. With the neighborhood of the singularity properly defined, we can distinguish between certain types of singularities depending on the nature of space-time events that take place in $N$.

\theoremstyle{definition}
\begin{definition}
A curvature singularity $\sigma$, generated by a scalar $R_X$ that depends on the spatial coordinates $x^i,\ldots,x^j$, will be called time-like if there exists a neighborhood $N$ of $\sigma$, in which the coordinate vector fields $\partial/\partial x^i,\ldots,\partial/\partial x^j$ are everywhere space-like. The neighborhood $N$ will be called a time-like neighborhood of the singularity.

\end{definition}

From this definition, one can see that a particle lying inside a neighborhood of a time-like singularity will not necessarily meet the singularity in the future of its world line.

These concepts shall be later applied to axially symmetric line elements. In this paper we will be interested in four-dimensional space-times $(M,g_{\mu\nu})$ that possess the following set of properties:

\begin{enumerate}
	\item Stationary, axially symmetric and satisfying the circularity condition\footnote{If $w_0$ and $w_3$ are the one-forms associated to the two commuting Killing vector fields $X_0$ and $X_3$, respectively, that exist as consequence of this first property, then the condition $w_0\wedge w_3\wedge dw_0=w_0\wedge w_3\wedge dw_3=0$ shall be called the circularity condition. This is equivalent to the Ricci circularity property $w_0\wedge w_3\wedge R(w_0)=w_0\wedge w_3\wedge R(w_3)=0$, where here $R(w_i)=R_{\mu\nu}X_i^\nu$ is the Ricci form ($i=0,3$) (see sections 8.3.1 and 8.3.2 of \cite{straumann} for reference).}
	\item Its geodesics admit a non-trivial\footnote{By non-trivial we mean a quadratic first integral other than the one yielded by the metric itself, since $g^{\mu\nu}p_{\mu}p_{\nu}$ is constant.} quadratic first integral
	\item Contains a time-like curvature singularity $\sigma$ whose singular curvature set $\sigma_X$ is non-empty.
	\item There exists an unphysical space-time $(\tilde{M},\widetilde{g}_{\mu\nu})$ such that $M\cup\sigma_X\subseteq\tilde{M}$ and $g^{\mu\nu}=\widetilde{g}^{\mu\nu}/\tau$ with $\widetilde{g}^{\mu\nu},\tau\in C^{\infty}$ in a neighborhood $\tilde{N}\subset\tilde{M}$ of $\sigma_X$.
\end{enumerate}

The circularity condition of property 1 holds for a wide class of energy-momentum tensors of great physical interest. For instance, vacuum space-times, Einstein-Maxwell fields, perfect fluid solutions with circular flow, and real scalar fields solutions \cite{stephani}. Therefore, the results here presented are not restricted to specific solutions of the Einstein field equations. We would also like to point out that property 2, at least for null geodesics, is fulfilled for any algebraically special space-time of type D \cite{killing}.

We now develop some auxiliary results regarding the implications of properties 1 to 4, which will be later used in the proof of the main theorem of the paper.

\begin{lemma}
If the geodesics of a four-dimensional, stationary, axially symmetric, and circular space-time $(M,g_{\mu\nu})$ admit a non-trivial quadratic first integral, then there exists a coordinate system $\left\{x^{\mu}\right\}$ in which $$g^{\mu\nu}=\left[L^{\mu\nu}(x^1)+\Theta^{\mu\nu}(x^2)\right]/\left[f(x^1)+h(x^2)\right],$$ with $L=L^{ij}\partial_i\otimes\partial_j+L^{11}\partial_1\otimes\partial_1$ and $\Theta=\Theta^{ij}\partial_i\otimes\partial_j+\Theta^{22}\partial_2\otimes\partial_2$ ($i,j=0,3$).
\end{lemma}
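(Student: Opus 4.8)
The plan is to reduce the statement to the separability structure familiar from Carter's work, constructing the required coordinates in stages: first from the symmetries together with circularity, then on the transverse $2$-surface, and finally by propagating the result to the stationary block. First I would introduce coordinates adapted to the two commuting Killing fields $X_0,X_3$ furnished by property~1, taking $x^0=t$ and $x^3=\varphi$ along their orbits so that all metric components are independent of $t$ and $\varphi$. The circularity condition $w_0\wedge w_3\wedge dw_0=w_0\wedge w_3\wedge dw_3=0$ is exactly the Frobenius integrability condition for the $2$-plane distribution orthogonal to $X_0,X_3$; choosing its integral surfaces as the level sets $x^1=\mathrm{const}$, $x^2=\mathrm{const}$ puts both $g_{\mu\nu}$ and $g^{\mu\nu}$ in the block-diagonal form $g=g_{ab}(x^1,x^2)\,dx^a\otimes dx^b+g_{AB}(x^1,x^2)\,dx^A\otimes dx^B$, with $a,b\in\{0,3\}$ and $A,B\in\{1,2\}$. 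Since $\sigma$ is time-like, $\partial_1$ and $\partial_2$ are space-like near it and the transverse surfaces are Riemannian.

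The non-trivial quadratic first integral is a function $K^{\mu\nu}p_\mu p_\nu$ on the geodesic phase space with $K^{\mu\nu}$ a rank-two Killing tensor, $\nabla_{(\lambda}K_{\mu\nu)}=0$, that is not a constant-coefficient combination of $g^{\mu\nu}$ and the products $X_0^\mu X_0^\nu$, $X_0^{(\mu}X_3^{\nu)}$, $X_3^\mu X_3^\nu$. After a symmetrization over the isometries that leaves $K$ non-trivial (legitimate since the space of Killing tensors is finite-dimensional and isometry-invariant), and discarding any such reducible piece, I may assume $K^{\mu\nu}=K^{\mu\nu}(x^1,x^2)$. The structural fact I would then extract from the Killing equations, exploiting the block-diagonal form of $g$, is that $K$ may also be taken block-diagonal: the mixed components $K^{aA}$ are governed by first-order equations that force them to be ``gauge'', i.e.\ a combination of the $X_i$ with $x^A$-dependent coefficients, which can therefore be removed, leaving $K=\kappa^{ab}(x^1,x^2)\,\partial_a\otimes\partial_b+\kappa^{AB}(x^1,x^2)\,\partial_A\otimes\partial_B$.

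Now $\kappa^{AB}$ is a Killing tensor of the Riemannian $2$-metric $g_{AB}$; being non-trivial it is not a multiple of $g_{AB}$, so it possesses two distinct, mutually orthogonal eigendirections, which I adopt as the coordinate curves $x^1,x^2$ (in two dimensions any pair of transverse line fields is surface-forming). Then $g_{AB}$ and $\kappa^{AB}$ are simultaneously diagonalized, and the equation $\nabla_{(A}\kappa_{BC)}=0$ integrates to the classical Liouville--Darboux normal form $g_{AB}\,dx^A dx^B=\bigl(f(x^1)+h(x^2)\bigr)\bigl((dx^1)^2+(dx^2)^2\bigr)$, so that the transverse block of $g^{\mu\nu}$ equals $\delta^{AB}/\bigl(f(x^1)+h(x^2)\bigr)$, already of the asserted shape with $L^{11}=\Theta^{22}=1$ and $L^{12}=\Theta^{12}=0$. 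It remains to handle the stationary block: substituting the normalized transverse part back into the Killing-tensor equations that symmetrize one transverse index against two cyclic ones reduces them to the statement that $\partial_1\bigl[(f+h)\,g^{ab}\bigr]$ depends on $x^1$ only and $\partial_2\bigl[(f+h)\,g^{ab}\bigr]$ on $x^2$ only, that is, $(f+h)\,g^{ab}=L^{ab}(x^1)+\Theta^{ab}(x^2)$. Collecting the $x^1$- and $x^2$-dependent parts into $L^{\mu\nu}$ and $\Theta^{\mu\nu}$ then yields $g^{\mu\nu}=\bigl[L^{\mu\nu}(x^1)+\Theta^{\mu\nu}(x^2)\bigr]/\bigl[f(x^1)+h(x^2)\bigr]$ with $L=L^{ij}\partial_i\otimes\partial_j+L^{11}\partial_1\otimes\partial_1$ and $\Theta=\Theta^{ij}\partial_i\otimes\partial_j+\Theta^{22}\partial_2\otimes\partial_2$, $i,j\in\{0,3\}$.

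The main obstacle is precisely the pair of steps in which the Killing tensor is actually used: showing it can be brought to block-diagonal form, and showing that the mixed Killing-tensor equations integrate to the additive structure of the $(t,\varphi)$-block; this is the computation underlying Carter's separable family, the transverse two-dimensional problem being entirely classical. Some care is also needed at points where the two eigenvalues of $\kappa^{AB}$ coincide, or where the transverse surface degenerates, and here property~4 together with the time-like character of $\sigma$ is what keeps the construction local but well defined.
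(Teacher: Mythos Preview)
Your proposal is correct as a plan and reaches the same conclusion, but the route differs from the paper's. The paper argues via Hamilton--Jacobi separability: after setting up the block-diagonal metric from circularity, it asserts (citing the general theory rather than proving it) that the existence of a non-trivial Killing tensor forces the geodesic Hamiltonian to separate additively in some adapted coordinates $x^1,x^2$, and this separability immediately constrains $g^{\mu\nu}$ to the claimed form. You instead work directly with the Killing-tensor equations $\nabla_{(\lambda}K_{\mu\nu)}=0$: you reduce $K$ to block-diagonal form, invoke the classical Liouville--Darboux normal form for a two-dimensional Riemannian metric carrying a non-trivial quadratic Killing tensor to handle the transverse block, and then integrate the mixed components of the Killing equation to obtain the additive splitting of the $(t,\varphi)$-block. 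Your approach is more geometric and in principle more self-contained, since it does not presuppose the equivalence between Killing tensors and HJ-separability; the paper's approach is shorter once that equivalence is granted, and it makes the link to the constants of motion $p_0,p_3,K,\kappa$ explicit from the outset, which is what the subsequent lemmas and theorems need. One minor point: your closing paragraph invokes property~4 and the time-like nature of $\sigma$ to control degenerate loci, but Lemma~1 as stated assumes only properties~1 and~2; those regularity issues are purely local coordinate questions on the transverse surface and do not require the additional hypotheses.
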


\begin{proof}
It follows from stationarity and axial symmetry that $g_{\mu\nu}$ is characterized by two commuting Killing vector fields $X_0=\partial/\partial t$ and $X_3=\partial/\partial\varphi$, where we have introduced coordinates $x^0=t$ and $x^3=\varphi$, which in flat space-time can be given the physical interpretation of coordinate time and azimuthal angle, respectively. Furthermore, to each corresponding Killing vector field there is an associated momentum $p_i=\partial\mathcal{L}/\partial\dot{x}^i$ that remains constant along geodesic curves\footnote{Throughout the rest of this paper Greek indices will run from 0 to 3 (as used conventionally), while lower-case Latin indices will only take the values of 0 and 3 ($i=0,3$), and upper-case Latin indices will take the values of 1 and 2 ($A=1,2$).}. Here, $\mathcal{L}=g_{\mu\nu}\dot{x}^\mu\dot{x}^\nu/2$ is the Lagrangian of a freely falling particle in the space-time and $\dot{x}^\mu=dx^\mu/d\lambda$ are its coordinate velocities with respect to an affine parameter $\lambda$. The Killing vectors $X_i$ represent isometries of the space-time in the directions $x^i$. Hence, having previously stated the physical meaning of these coordinates, we can relate the momenta $p_0=-\mathcal{E}$ and $p_3=L_z$ to the energy of the test particle and its projection of angular momentum on the z-axis.

Also, from the fulfillment of the circularity condition it follows that the $2$-planes orthogonal to the Killing vectors $X_i$ are integrable \cite{straumann}. Thus, there exist adapted coordinates $y^1$ and $y^2$ such that the metric tensor $g$ can be divided into two subspaces $g=\gamma\oplus G$, where $\gamma=\gamma_{ij}dx^i\otimes dx^j$ and $G=G_{AB}dy^A\otimes dy^B$. Furthermore, $g_{\mu\nu}$ depends only on the $y^A$ coordinates.

If the space-time $(M,g_{\mu\nu})$ admits a non-trivial quadratic first integral, then there exists a quadratic (or second-rank) Killing tensor $K^{\mu\nu}$ \cite{killing}. This tensor will yield a fourth constant of motion when contracted twice with the momenta $p_\mu$, that is, $K=K^{\mu\nu}p_\mu p_\nu$. The other three conserved quantities are the pair of momenta $p_i$, and the Hamiltonian of a freely falling test particle $2\mathcal{H}=g^{\mu\nu}p_\mu p_\nu=\kappa$, where $\kappa=0$ for null geodesics and $\kappa=-1$ for time-like geodesics.

Using the Hamilton-Jacobi equation it can be proven that the fourth conserved quantity comes from the separability of the Hamiltonian in two terms, each depending on the coordinates $x^A$ of some special coordinate system $\left\{t,x^1,x^2,\varphi\right\}$, and expressed as

\begin{equation}
 2\mathcal{H}=\kappa=[F_1(x^1)+F_2(x^2)]/[f(x^1)+h(x^2)],
\label{H}
\end{equation}

with $p_1=p_1(x^1)$ and $p_2=p_2(x^2)$. Since $g^{\mu\nu}p_\mu p_\nu=2\mathcal{H}$, equation (\ref{H}) constraints the form of the inverse metric tensor in the following way

\begin{equation}
g^{\mu\nu}=\frac{L^{\mu\nu}(x^1)+\Theta^{\mu\nu}(x^2)}{f(x^1)+h(x^2)},
\label{ginv}
\end{equation}

where $L^{\mu\nu}$ and $\Theta^{\mu\nu}$ are symmetrical tensors with the restriction $L^{2\mu}=L^{\mu2}=\Theta^{1\nu}=\Theta^{\nu1}=0$ and $f(x^1)$, $h(x^2)$ are one parameter functions. Notice that if the mentioned restriction on the symmetrical tensors would not be imposed, separability could not be achieved. Also note that the coordinates $x^A$ need not be the same as the previously introduced adapted coordinates $y^A$, however, it can be seen that the $\left\{x^\mu\right\}$ system can consist of adapted coordinates too. Suppose (\ref{H}) is not separable in the $y^A$ coordinates, then a change of basis from $y^A$ to $x^A$ using $y^A=y^A(x^1,x^2)$, would only affect the subspace of the metric orthogonal to both $X_i$ and hence, $x^A$ are still adapted coordinates. So, not any system of adapted coordinates will make equation (\ref{H}) separable, but those who do can also be adapted to the metric. Equation (\ref{H}) is separable too if a coordinate change of the form $x'^1=x'^1(x^1)$ and $x'^2=x'^2(x^2)$ is performed. Additionally, taking into account that $x^A$ are adapted coordinates of the metric, we have the further restriction on the symmetrical tensors that the only non-vanishing components of $L^{1\mu}=L^{\mu1}$ and $\Theta^{2\mu}=\Theta^{\mu2}$ are $L^{11}$ and $\Theta^{22}$, respectively. Adding up these restrictions, the symmetrical tensors can finally be written as $L=L^{ij}\partial_i\otimes\partial_j+L^{11}\partial_1\otimes\partial_1$ and $\Theta=\Theta^{ij}\partial_i\otimes\partial_j+\Theta^{22}\partial_2\otimes\partial_2$.

\end{proof}

It can also be shown that the Killing tensor is given by (see appendix A of \cite{DelAguila} for details)

\begin{equation}
K^{\mu\nu}=f(x^1)g^{\mu\nu}-L^{\mu\nu}(x^1)=\Theta^{\mu\nu}(x^2)-h(x^2)g^{\mu\nu},
\label{Killing}
\end{equation}

and so, the fourth constant of motion $K$ reads

\begin{equation}
K=f(x^1)\kappa-L^{ij}(x^1)p_ip_j-L^{11}(x^1)p_1^2=\Theta^{ij}(x^2)p_ip_j+\Theta^{22}(x^2)p_2^2-h(x^2)\kappa.
\label{K}
\end{equation}

Note that each equation of (\ref{K}) depends on a single coordinate.

\begin{lemma}
In a space-time $(M,g_{\mu\nu})$ with properties 1 to 4, the conformal factor $\tau$ can be defined to be positive definite in a time-like neighborhood $N$ of $\sigma$, and then $\left.\widetilde{g}^{AA}\right|_{N}>0$ for both $A=1,2$.
\end{lemma}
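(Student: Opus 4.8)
The plan is to prove the two assertions in turn: first I would normalize the conformal factor $\tau$ to be positive on a suitable neighborhood of $\sigma$, and then combine this with the block structure of the inverse metric established in Lemma~1 and with the definition of a time-like neighborhood to fix the signs of $\widetilde{g}^{11}$ and $\widetilde{g}^{22}$.

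For the first assertion I would take $N=\tilde{N}\cap M$ to be a neighborhood of $\sigma$ that is simultaneously time-like (one exists by Property~3) and contained in the region where Property~4 applies, and argue as follows. On $\sigma_X$ one must have $\tau=0$, for otherwise $g^{\mu\nu}=\widetilde{g}^{\mu\nu}/\tau$ would be smooth and non-degenerate near some $p\in\sigma_X$ (since $\widetilde{g}^{\mu\nu}$ is), so all curvature invariants would be finite at $p$, contradicting $p\in\sigma_X$. On the other hand $\tau\neq 0$ on $N$, because $M$ consists of regular points, so $g_{\mu\nu}$ and hence $g^{\mu\nu}$ is non-degenerate there. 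Thus $\tau$ is continuous and nowhere vanishing on $N$ and so has constant sign on each connected component; since replacing $(\widetilde{g}^{\mu\nu},\tau)$ by $(-\widetilde{g}^{\mu\nu},-\tau)$ leaves $g^{\mu\nu}$ unchanged and preserves smoothness, I may take $\tau>0$ on $N$. If $\sigma_X$ disconnects $\tilde{N}$, the same sign on every component of $N$ is forced by the requirement that $g_{\mu\nu}=\tau\,\widetilde{g}_{\mu\nu}$ carry a fixed Lorentzian signature.

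For the second assertion I would exploit the structure of $g^{\mu\nu}$ in Lemma~1: since the only non-vanishing components of $L^{1\mu}$, $\Theta^{2\mu}$ are $L^{11}$, $\Theta^{22}$ and $L^{2\mu}=\Theta^{1\mu}=0$, one gets $g^{1i}=g^{2i}=g^{12}=0$ for $i=0,3$, so $g^{\mu\nu}$ is block diagonal with a $2\times2$ block in the Killing directions and two $1\times1$ blocks $g^{11}=L^{11}(x^1)/[f(x^1)+h(x^2)]$, $g^{22}=\Theta^{22}(x^2)/[f(x^1)+h(x^2)]$. Inverting a block-diagonal matrix gives a block-diagonal matrix whose blocks have the same signature, so $g_{AA}=1/g^{AA}$ for $A=1,2$ and, in particular, $g^{AA}$ and $g_{AA}$ always share their sign. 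Then, using that in a time-like neighborhood the coordinate vector fields on which $R_X$ depends are everywhere space-like --- and that, the metric being independent of the Killing coordinates, $R_X$ depends only on $x^1,x^2$, and for a genuine curvature singularity of this class on both of them --- I would conclude that $\partial_1,\partial_2$ are space-like in $N$, i.e. $g_{11}>0$ and $g_{22}>0$; hence $g^{11}>0$ and $g^{22}>0$ on $N$, and therefore $\widetilde{g}^{AA}=\tau\,g^{AA}>0$ there for $A=1,2$. (With the natural choice $\tau=f(x^1)+h(x^2)$, $\widetilde{g}^{\mu\nu}=L^{\mu\nu}+\Theta^{\mu\nu}$ this reads $L^{11}>0$ and $\Theta^{22}>0$ on $N$.)

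The hardest part will be the first step --- establishing that $\tau$ has a definite sign and that this sign can be chosen positive --- since it relies on the non-uniqueness of the conformal split in Property~4 and on controlling the global (connectedness) structure of the time-like neighborhood around $\sigma_X$. A secondary point needing care is the claim that $R_X$ genuinely depends on both non-Killing coordinates $x^1$ and $x^2$, so that the definition of a time-like neighborhood yields space-likeness of both $\partial_1$ and $\partial_2$; were $R_X$ to depend on only one of them, the sign of the other would instead have to be read off from the Lorentzian signature of the block-diagonal $g_{\mu\nu}$. The remaining ingredients are the elementary fact that a block-diagonal matrix and its inverse have blocks of matching signature, together with $\widetilde{g}^{\mu\nu}=\tau\,g^{\mu\nu}$.
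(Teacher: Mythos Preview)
Your argument is correct and reaches the same conclusion, but by a genuinely different route than the paper. The paper first identifies $\tau=f(x^1)+h(x^2)$, $\widetilde{g}^{11}=L^{11}(x^1)$, $\widetilde{g}^{22}=\Theta^{22}(x^2)$ from Lemma~1 and then runs a ``coordinate-rectangle'' contradiction: starting from a point $(x_1^1,x_1^2)\in N$ with $\tau>0$ (hence $L^{11}(x_1^1),\Theta^{22}(x_1^2)\geq 0$ since $g_{AA}>0$), it fixes $x_1^1$ and supposes $\tau(x_1^1,x_2^2)<0$ at some other $x_2^2$; because $L^{11}$ depends on $x^1$ alone, $L^{11}(x_1^1)$ is unchanged and $g_{11}=\tau/L^{11}<0$ there, contradicting time-likeness of $N$. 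The essential ingredient is the one-variable dependence of $L^{11}$ and $\Theta^{22}$. You instead argue topologically: $\tau$ must vanish on $\sigma_X$ (forced by Property~4 together with non-degeneracy of $\widetilde g$) and is nonzero on $N\subset M$, so by continuity it has a fixed sign on each component of $N$, which may then be normalized to $+$. Your second step reads off $\widetilde g^{AA}=\tau\,g^{AA}>0$ directly from $g_{AA}=1/g^{AA}>0$ via block-diagonality, never using separability. Your route is more elementary and uses less of the structure from Lemma~1; the paper's approach, by exploiting separability, sidesteps any discussion of the connectedness of $N$ and does not need the observation $\tau|_{\sigma_X}=0$ (indeed the paper later entertains the ``trivially complete'' case $\tau(x_0^A)\neq 0$, which your argument shows is actually excluded by Property~4 as stated).
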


\begin{proof}
In a stationary, axially symmetric, and circular space-time with a time-like singularity $\sigma$, the vectors $\partial/\partial x^A$ are everywhere space-like in a neighborhood $N$ of $\sigma$. This implies that $\left.\nabla\tau\right|_N$ is space-like too. Using the adapted coordinates $\left\{x^\mu\right\}$ and the conformal form of the metric of property 4, we have $\left.g_{AA}\right|_N=\left.\tau/\widetilde{g}^{AA}\right|_N>0$. We can identify the quantities appearing in the inverse metric (\ref{ginv}) with the conformal factor and unphysical metric as $\tau=f(x^1)+h(x^2)$, and $\widetilde{g}^{\mu\nu}=L^{\mu\nu}(x^1)+\Theta^{\mu\nu}(x^2)$.

Now the lemma can be proven by contradiction. Consider a pair of points $x_1^A$ in $N$ such that $\tau(x_1^1,x_1^2)>0$. Then, $L^{11}(x_1^1),\Theta^{22}(x_1^2)\geq0$ because $N$ is a time-like neighborhood. Assume now there exist a different pair of points in $N$, say $x_1^1$ and $x_2^2$, for which $\tau(x_1^1,x_2^2)<0$. We now have that $g_{11}=\tau(x_1^1,x_2^2)/L^{11}(x_1^1)<0$ which clearly contradicts the hypothesis of $N$ being time-like. The same can be done for the $g_{22}$ component by considering other pair of points, $x_1^2$ and $x_2^1$, for which $\tau(x_2^1,x_1^2)<0$, thereby discarding also a change of sign of $\tau$ when keeping the point $x_1^2$ constant. As a result, we have that $\tau$ can be expressed as positive definite or negative definite, this implies that $\left.\widetilde{g}^{AA}\right|_{N}>0$ or $\left.\widetilde{g}^{AA}\right|_{N}<0$, respectively. We choose the positive definite option.

\end{proof}

Regarding the third property of the space-time, we can particularize the previously defined singular curvature set $\sigma_X$ to a stationary and axially symmetric space-time. Since $g_{\mu\nu}=g_{\mu\nu}(x^A)$, then  in the Cartesian-(like) coordinates $\{u^\mu\}$,

\begin{equation}
\sigma_X=\left\{\left(t,v\cos\varphi,v\sin\varphi,u^3\right)\in\mathbb{R}^4\mid 1/R_X(u^\mu)=0\right\},
\label{scs}
\end{equation}

with the quantities $v$ and $u^3$ depending only on the coordinates $x^A$. These coordinates need not be adapted to the metric. A point $q(x_0^A)\in\sigma_X$ can be expressed as $q(x_0^A)=(t,v_0\cos\varphi,v_0\sin\varphi,u_0^3)$, where $v_0=v(x_0^A)$ and $u_0^3=u^3(x_0^A)$. It is readily seen that if the  pair of points $x_0^A$ is unique for a given space-time we have that $$\sigma_X=\left\{(t,v_0\cos\varphi,v_0\sin\varphi,u_0^3)\mid-\infty<t<\infty,\,0\leq\varphi<2\pi\right\},$$ i.e., $\sigma_X=S^1\times\mathbb{R}$ and spatially compact provided that $v_0\neq0$. This will be the case for the class of metrics presented in section III.

The form of the inverse metric (\ref{ginv}) that resulted from lemma 1 can be utilized to compute the curvature scalars of the manifold. Their general expression is

\begin{equation}
R_X=F_X(L^{\mu\nu},\Theta^{\mu\nu},f,h)\Gamma^n/\tau^m,
\label{Ricci}
\end{equation}

where $F_X$ is a rather complicated function that depends on the curvature invariant, $\Gamma=det(\gamma)$ and $n,m\in\mathbb{Z}^+$. For the Ricci scalar, for example, $n=2$ and $m=7$. Examining (\ref{Ricci}) it can be observed that, if there exists a pair of points $x_0^A$ for which $\tau(x_0^A)=0$, then $q(x_0^A)\in\sigma_X$ and a curvature singularity can emerge in a common case. This hypothetical pair of points are later going to be of great relevance to the problem of affine completeness.

\begin{remarkk}
Other curvature singularities may arise apart from that of the pair $x_0^A$. For instance, the possible divergence of the determinant $\Gamma$ will yield another curvature singularity. This case will in general define a singular hyper-surface, and hence, $\sigma_X=\Sigma^2\times\mathbb{R}$, where $\Sigma^2$ is a two-manifold. See the metric of section IV for an explicit example of this. 
\end{remarkk}

\subsection{Geodesics Encountering the Curvature Singularity}

We are now ready to present the first theorem enlisting once again the properties of the space-times of interest. 

\begin{theorem}
Let $(M, g_{\mu\nu})$ be a four-dimensional space-time with $(-,+,+,+)$ signature and the following set of properties:
\begin{enumerate}
	\item Stationary, axially symmetric and satisfying the circularity condition
	\item Its geodesics admit a non-trivial quadratic first integral
	\item Contains a time-like curvature singularity $\sigma$ whose singular curvature set $\sigma_X$ is non-empty.
	\item There exists an unphysical space-time $(\tilde{M},\widetilde{g}_{\mu\nu})$ such that $M\cup\sigma_X\subseteq\tilde{M}$ and $g^{\mu\nu}=\widetilde{g}^{\mu\nu}/\tau$ with $\widetilde{g}^{\mu\nu},\tau\in C^{\infty}$ in a neighborhood $\tilde{N}\subset\tilde{M}$ of $\sigma_X$.
\end{enumerate}

Let also $X_i$ (with $i=0,3$) be the two Killing vectors related to property 1 of the metric, and $p_i$ their associated momenta. Define $\psi(p_i)=\tau(\kappa-g^{ij}p_ip_j)$, where $\kappa=0,-1$ for null and time-like geodesics, respectively. Then, at least a curve of the family $\eta(p_i)$ of causal geodesics in $N\subset M$, defined by a given pair $p_i\in\mathbb{R}$, will meet the singularity $\sigma$ if, and only if, starting from $n=0$, for any point $q\in\sigma_X$ and any $A=1,2$, there exists a first non-vanishing derivative $\left.\partial_A^{n}\psi(p_0,p_3)\right|_q$ such that either $n$ is odd, or the derivative is positive with $n$ even.
\end{theorem}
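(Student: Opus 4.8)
The plan is to reduce the question of whether a causal geodesic reaches the singularity to a one–dimensional problem in each of the separated coordinates $x^1$ and $x^2$, and then translate the reachability condition into a statement about the sign of the first non-vanishing derivative of $\psi$ at a singular point. First I would use Lemma 1 to write the inverse metric in the separated form $g^{\mu\nu}=[L^{\mu\nu}(x^1)+\Theta^{\mu\nu}(x^2)]/[f(x^1)+h(x^2)]$ and identify $\tau=f(x^1)+h(x^2)$ and $\widetilde g^{\mu\nu}=L^{\mu\nu}(x^1)+\Theta^{\mu\nu}(x^2)$ as in Lemma 2. Using the Hamiltonian constraint $g^{\mu\nu}p_\mu p_\nu=\kappa$ together with the Killing-tensor constant $K$ from equation (\ref{K}), the momenta conjugate to $x^1$ and $x^2$ separate: $L^{11}(x^1)\,p_1^2 = f(x^1)\kappa - L^{ij}p_ip_j - K$ and $\Theta^{22}(x^2)\,p_2^2 = \Theta^{ij}p_ip_j - h(x^2)\kappa + K$. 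Adding these two equations eliminates $K$ and, after multiplying through by $\tau$, yields exactly $L^{11}p_1^2 + \Theta^{22}p_2^2 = \tau(\kappa - g^{ij}p_ip_j) = \psi(p_i)$ — so $\psi$ is precisely the quantity controlling the ``radial'' kinetic energy available along the separated directions.

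Next I would set up the geodesic equations in the separated coordinates. From the above, $(\dot x^A)^2$ is, up to the positive factors $\widetilde g^{AA}=L^{11}$ or $\Theta^{22}$ (positive in $N$ by Lemma 2) and a power of $\tau$, governed by $\psi$ evaluated along the motion. A causal geodesic can approach a singular point $q\in\sigma_X$, which (by the discussion following equation (\ref{Ricci})) corresponds to a pair $x_0^A$ with $\tau(x_0^A)=0$, only if the corresponding one-dimensional potential does not forbid the coordinate from reaching $x_0^A$: i.e. the right-hand side $\psi$ must remain nonnegative on a path from the particle's location to $q$, and the affine-parameter integral $\int dx^A/\sqrt{\psi\cdot(\text{positive})}$ must converge so the singularity is reached at finite or controlled affine parameter. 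I would Taylor-expand $\psi(p_i)$ in $x^A$ about $x_0^A$: if the first non-vanishing derivative $\partial_A^n\psi|_q$ occurs at odd $n$, then $\psi$ changes sign through $q$, so there is always one side on which $\psi>0$ and the geodesic can be aimed at $q$ from that side; if $n$ is even, then $\psi$ has a definite sign near $q$ equal to that of $\partial_A^n\psi|_q$, so a geodesic can reach $q$ iff that derivative is positive. The condition ``starting from $n=0$'' is what incorporates the generic case $\psi(q)\neq 0$ (the $n=0$ ``derivative'' being $\psi$ itself), which must then be positive. One then checks that when the sign condition holds the affine-length integral indeed converges (the integrand behaves like $|x^A-x_0^A|^{-n/2}$, integrable for $n=0,1$, and for larger $n$ one must verify that the competing behavior of $\tau^{-m}$-type factors and the remaining equations still allows completion in finite affine parameter, or else argue via the extendibility of $\widetilde g$ that the curve is inextendible in $M$ precisely because it hits $\sigma_X$).

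For the converse direction I would argue contrapositively: if for some $q$ and some $A$ the first non-vanishing derivative has even $n$ and is negative, then $\psi<0$ in a punctured neighborhood of $x_0^A$ along that coordinate, so $(\dot x^A)^2<0$ there — an impossibility — meaning no geodesic of the family $\eta(p_i)$ can have $x^A$ enter that neighborhood, hence none can reach $q$; since this must hold for every $q\in\sigma_X$ and every $A$ for a curve to reach the singularity, failure at any single $(q,A)$ blocks the whole family. Conversely if the sign condition holds at every point of $\sigma_X$, pick any $q$ and construct the geodesic with the separated data aimed so that both $x^1\to x_0^1$ and $x^2\to x_0^2$, using the freedom in $K$ and in the initial position/direction; Lemma 2 guarantees the prefactors $\widetilde g^{AA}$ are positive so the construction is consistent.

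The main obstacle I expect is the affine-parameter convergence analysis near $q$: showing rigorously that the sign condition on $\psi$ is not merely necessary but genuinely sufficient for the singularity to be attained \emph{as a geodesic endpoint} — i.e. controlling the interplay between the vanishing of $\tau$, the behavior of the factors $\widetilde g^{AA}$, and the $t$- and $\varphi$-equations, and handling the borderline even-$n$ cases where $\psi$ vanishes to high order so that $\dot x^A\to 0$ at $q$ and one must check the coordinate still reaches $x_0^A$ at finite affine parameter rather than only asymptotically. Carefully isolating which combinations of $n$, the order of vanishing of $\widetilde g^{AA}$, and the exponent $m$ in (\ref{Ricci}) give a convergent integral is the technical heart of the argument; the sign bookkeeping via the Taylor expansion of $\psi$ is then essentially algebraic.
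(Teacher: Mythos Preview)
Your overall strategy --- separate via Lemma~1, derive the identity $\psi=L^{11}p_1^2+\Theta^{22}p_2^2$, and read off reachability from the sign of the leading Taylor coefficient of $\psi$ --- matches the paper's in spirit, but the mechanism you describe has a real gap. The individual coordinate motions are \emph{not} governed by $\psi$: from the separated equations one has $[\tau\dot x^1]^2=\Xi^1(x^1)=L^{11}(\psi_1-K)$ and $[\tau\dot x^2]^2=\Xi^2(x^2)=\Theta^{22}(\psi_2+K)$, where $\psi=\psi_1(x^1)+\psi_2(x^2)$ and $K$ is the Carter constant. Your identity gives necessity in the $n=0$ case instantly (if $\psi|_q<0$ no geodesic can sit there), but for sufficiency you must argue that $\psi|_q>0$, i.e.\ $-\alpha>\beta$ with $\alpha=-\psi_1(x_0^1)$ and $\beta=-\psi_2(x_0^2)$, is precisely the condition that there \emph{exists} $K\in[\beta,-\alpha]$ making both $\Xi^A(x_0^A)\ge 0$ simultaneously. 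This existence-of-$K$ step is the crux of the paper's proof and you only allude to it (``using the freedom in $K$'') without carrying it out.

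The gap is sharper in the higher-order cases. When $\psi|_q=0$ and, say, $\partial_1^2\psi|_q<0$ is first non-vanishing, you write ``$\psi<0$ in a punctured neighborhood of $x_0^1$ along that coordinate, so $(\dot x^1)^2<0$''. But a geodesic need not approach $q$ along the slice $x^2=x_0^2$; both coordinates may move together, and $\psi$ can stay nonnegative along such a path even though it is negative on that slice. The paper's argument instead observes that $\psi|_q=0$ forces $K=-\alpha=\beta$, so both $\Xi^A(x_0^A)=0$; it then restricts motion to a plane $x^B\equiv x_0^B$ and examines the Taylor coefficients $c_n^A=\tfrac{1}{n!}\widetilde g^{AA}\,\partial_A^n\psi|_q$ of each separated potential $\hat\Xi^A$ individually --- this is where the ``odd $n$, or positive with even $n$'' dichotomy actually arises, one coordinate at a time. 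Finally, your concern about convergence of the affine-parameter integral is misplaced for this theorem: the paper's criterion for ``meeting $\sigma$'' is literally $\Xi^A(x_0^A)\ge 0$ for both $A$ (together with the one-sided sign near $x_0^A$ in the borderline case); the finiteness-of-affine-parameter question is deferred entirely to Theorem~2.
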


\begin{proof}
[Proof] From lemma 1 and the subsequent equation (\ref{K}) we can express the separated equations of motion in terms of the velocities $\dot{x}^1$ and $\dot{x}^2$,

\begin{eqnarray}
\left[(f+h)\dot{x}^1\right]^2&=&L^{11}(f\kappa-L^{ij}p_ip_j-K):=\Xi^1(x^1), 
\label{R} \\
\left[(f+h)\dot{x}^2\right]^2&=&\Theta^{22}(K+h\kappa-\Theta^{ij}p_ip_j):=\Xi^2(x^2),
\label{Xi}
\end{eqnarray}

here we have used $\dot{x}^A=g^{AA}p_A$. Note that trajectories defined by these equations of motion will only be possible for coordinate values such that $\Xi^A(x^A)\geq0$. At this point it will be helpful to introduce the following notation: $f(x^1)=j_1$, $h(x^2)=j_2$ and $\tau=j_1+j_2$. Please be aware that the superscript or subscript in the quantities $\Xi^A$ and $j_A$ is a tag for values $A=1,2$. However, it is not a tensorial index.

In the singularity $\sigma$, the equations of motion yield for some point $q(x_0^A)\in\sigma_X$:

\begin{eqnarray}
\Xi^1(x_0^1)&=&-L^{11}(x_0^1)(\alpha+K), \label{Ra}\\
\Xi^2(x_0^2)&=&-\Theta^{22}(x_0^2)(\beta-K),
\label{Xib}
\end{eqnarray}

with $\alpha=L^{ij}(x_0^1)p_ip_j-f(x_0^1)\kappa$ and $\beta=\Theta^{ij}(x_0^2)p_ip_j-h(x_0^2)\kappa$, which are quantities that are only in terms of parameters of the space-time (e.g. mass, angular momentum, etc.) and the constants of motion $p_i$. So, the curvature singularity will only be reached by geodesics if there exists a non-zero set of conserved quantities $p_0,p_3,K\in\mathbb{R}$ for which $\Xi^A(x_0^A)\geq0$ for both $A=1,2$.

By lemma 2 we have that\footnote{If $\tau$ were to be chosen as negative definite, the proof could carry on but with $L^{11}(x_0^1),\Theta^{22}(x_0^2)<0$. With this slight difference the theorem would still be valid, but with the final criterion for the first non-vanishing derivative $\left.\partial_A^{n}\psi(p_0,p_3)\right|_q$ changed to negative in the case of $n$ even.} $L^{11}(x_0^1),\Theta^{22}(x_0^2)>0$. Using these last conditions, one can easily realize that $\Xi^A(x_0^A)\geq0$ if, and only if, $\alpha+\beta<0$ for some real values of $p_0,p_3$. This last inequality can be rewritten as  

\begin{equation}
\left.\tau(g^{ij}p_ip_j-\kappa)\right|_q<0.
\label{crit0}
\end{equation}

It is important to remark that this is true because, for $\alpha+\beta<0$, one can always choose $K$ so that $\Xi^A(x_0^A)\geq0$ for both values of $A$. The singularity could appear, then, in the trajectory of a geodesic. On the contrary, if $\alpha+\beta>0$, there will not exist $K\in\mathbb{R}$ such that $\Xi^A(x_0^A)\geq0$ simultaneously and thus, an observer in geodesic motion does not reach the point $q$ of the singularity. Hereafter, we shall refer sometimes to these opposite situations as ``$\sigma$-encountering" and ``$\sigma$-avoidance", respectively. Note that the case $\alpha+\beta=0$ with arbitrary $K\neq-\alpha$ behaves similarly. This can be stated with all generality, except for the particular case $K=-\alpha=\beta$ which will be discussed later.  

We must now distinguish between two specific possibilities, namely $\tau(x_0^A)\neq0$ and $\tau(x_0^A)=0$. It turns out that in terms of geodesic incompleteness, the first one is practically trivial, while the second one proves to be far more interesting. Consider the equations of motion (\ref{R}) and (\ref{Xi}), as well as the two remaining ones $\dot{x}^i=\widetilde{g}^{ij}p_j/\tau$, it can be easily seen that

\begin{eqnarray}
\lim_{x^A\rightarrow x_0^A}(\dot{x}^A)^2&=&\begin{cases} 
      \mathcal{C}^A & \text{if } \tau(x_0^A)\neq0,\\
      \pm\infty & \text{if } \tau(x_0^A)=0,
      \end{cases}
\label{limxa} \\
\lim_{x^A\rightarrow x_0^A}\dot{x}^i&=&\begin{cases} 
      \mathcal{D}^i & \text{if } \tau(x_0^A)\neq0,\\
      \pm\infty & \text{if } \tau(x_0^A)=0,
      \end{cases}
\label{limxi}
\end{eqnarray}

where $\mathcal{C}^A=\Xi^A(x_0^A)/\tau(x_0^1,x_0^2)$ and $\mathcal{D}^i=\widetilde{g}^{ij}(x_0^A)p_j/\tau(x_0^1,x_0^2)$ are well-defined constants. Hence, even if both $\mathcal{C}^A>0$, the coordinate velocities of any geodesic approaching the singularity remain finite. The geodesic can be smoothly continued to future values of its affine parameter after, and despite, touching $\sigma$. In other words, the curvature singularity itself does not pose a threat to affine completeness (at least in a neighborhood $N$ of the curvature singularity) for the case $\tau(x_0^A)\neq0$. This will be called the trivially complete case. On the other hand, the second case could potentially lead to incompleteness. The signs of the limits taken in (\ref{limxa}) and (\ref{limxi}) depend clearly on the signs of $\Xi^A(x_0^A)$ and $\widetilde{g}^{ij}(x_0^A)p_j$ . If both $\Xi^A(x_0^A)>0$, not only geodesics do reach the singularity as mentioned before, but they also become incomplete.

\begin{remark}
This special situation needs to be considered since we have three independent conserved quantities that determine the motion of the test particle. Out of the four existing constants of motion, $\kappa$ is fixed depending on the nature of the geodesics (time-like or light-like), thus we are left with three degrees of freedom. This means we can impose restrictions on $p_i$ such that $-\alpha=\beta$ and then $K$ can be chosen to be equal to those expressions, leaving us with one undetermined conserved quantity. Despite this, and once the explicit restrictions are known, one should verify they correspond to physically realistic scenarios. The equations of motion for this case become

\begin{equation}
\left(\dot{x}^A\right)^2=\frac{\hat{\Xi}^A(x^A)}{\tau^2(x^1,x^2)}, 
\label{Xis}
\end{equation}

here we have defined $\hat{\Xi}^1(x^1):=L^{11}(f\kappa-L^{ij}p_ip_j+\alpha)$ and $\hat{\Xi}^2(x^2):=\Theta^{22}(h\kappa-\Theta^{ij}p_ip_j-\alpha)$. It is clear that, in the trivially complete case, there will be no trouble at all evaluating the limits (\ref{limxa}) and one simply obtains $\dot{x}^A=0$. Depending on the second derivative $\ddot{x}^A$, either $q(x_0^A)$ will be a turning point or motion will be constrained to it. No further discussion is needed and inequality (\ref{crit0}) alone suffices to determine if geodesics avoid or not the point $x_0^A$ of the singularity.

For the case $\tau(x_0^A)=0$, the analysis increases in complexity but similar conclusions can be drawn. As the singularity $\sigma$ is being approached, i.e. as $x^A\rightarrow x_0^A$, the functions $\hat{\Xi}^A(x_0^A)\rightarrow 0$ and $\tau(x_0^1,x_0^2)\rightarrow 0$. So, the limit for equations (\ref{Xis}) is undetermined. We must be careful on this matter since it could be possible that said limit does not exist, that is, it depends on the path taken to reach the singularity. The simplest way to verify this is to expand the functions $\hat{\Xi}^A$ in power series around the discussed point of the singularity, thus describing their behavior in a neighborhood of it: $\hat{\Xi}^A=\sum_nc_n^A(x^A-x_0^A)^n$. Observe that if the leading term of a given series $\hat{\Xi}^A$ is odd, then the limit we are computing does not exist since $\lim_{x^A\rightarrow x_0^A}(\dot{x}^A)^2=\pm\infty$, with the sign depending on the way $x_0^A$ is being approached. An explicit expression for the coefficients $c_n^A$ shall be given forward. It is worth mentioning too that one can always find a well-defined series expansion of these functions, as well as any other depending on $\widetilde{g}^{\mu\nu}$ and $\tau$, due to property 4 of the space-time.

The two cases of motion constrained to each $x^A=x_0^A$ plane ($\dot{x}^A=\ddot{x}^A=0$) deserve particular attention. The necessary conditions for this to happen are $K=-\alpha$, $c_1^1=0$ for $A=1$, and $K=\beta$, $c_1^2=0$, for $A=2$. Note that both situations do not represent any problem for $\alpha\neq\beta$ and inequality (\ref{crit0}) suffices to determine if there exists any geodesic that touches the singularity. Nevertheless, the case $K=-\alpha=\beta$ results in an undetermined limit for (\ref{Xis}) as previously explained. Also, the remaining undetermined constant of motion can be used to satisfy either $c_1^1=0$ or $c_1^2=0$. 

Restricting motion to $x^2=x_0^2$, the leading term in $\hat{\Xi}^1$ will be $c_1^1=\left[-L^{11}\left.\partial_1(L^{ij}p_ip_j)\right]\right|_q$. To guarantee that geodesics are unable to reach the point $x_0^A$ of the singularity this quantity must vanish identically, we then are left with the second order coefficient as the leading term $c_2^1=\left[L^{11}\partial_1^2\left.(j_1\kappa-L^{ij}p_ip_j)\right]\right|_q$. The requirement that $c_1^1$ vanishes implies local symmetry of $\hat{\Xi}^1$ about $x_0^1$. If $c_1^2<0$, there will be $\sigma$-avoidance, if $c_2^1>0$ there will be $\sigma$-encountering. However, consider a pair of constants of motion $p_i$ chosen so that $c_2^1=0$. This is still possible because, as $c_1^1$ vanishes identically, we still have one degree of freedom for those conserved quantities. For this case, the $n=3$ coefficient would serve as a criterion to determine whether geodesics do reach or not $\sigma$. The same characteristics described for the first-order term of the expansion apply for $n=3$, in fact, for any $n$ odd. In a similar manner, the characteristics described for the second-order term apply for any $n$ even. In general, the $n$-th coefficient of this series will be given by

\begin{equation}
c_n^1=\frac{1}{n!}L^{11}\partial_1^n\left.(j_1\kappa-L^{ij}p_ip_j)\right|_q,
\label{crit1n}
\end{equation}

as long as the preceding terms vanish, that is, $c_m^1=0$ for all $m<n$ with $m,n\in\mathbb{Z}^+$. The same analysis can be done when constraining motion to the plane $x^1=x_0^1$. The coefficients $c_n^2$ for the function $\hat{\Xi}_2$ are

\begin{equation}
c_n^2=\frac{1}{n!}\Theta^{22}\partial_2^n\left.(j_2\kappa-\Theta^{ij}p_ip_j)\right|_q,
\label{crit2n}
\end{equation}

as long as the preceding terms vanish. With this, we end the discussion of the considerations that need to be taken into account for the case $K=-\alpha=\beta$.
\end{remark}

Together, the expressions given by (\ref{crit0}), (\ref{crit1n}) and (\ref{crit2n}) determine if geodesics are able to reach the curvature singularity $\sigma$, they may be rewritten in the general form

\begin{equation}
\left.\partial_A^n\left[\tau(\kappa-g^{ij}p_ip_j)\right]\right|_q>0.
\label{critg}
\end{equation}

We may regard the set of all causal geodesics in the space-time $(M, g_{\mu\nu})$ as a six-parameter family of curves\footnote{Since there are two Killing vectors $X_i$, the number of parameters is reduced from the standard $8$ (the set of initial conditions $x^\mu(0)$ and $\dot{x}^\mu(0)$) to $6$, due to the isometries of the space-time in the directions $x^i$.}. For convenience, these parameters are chosen to be the conserved quantities $p_i,K\in\mathbb{R}$, $\kappa=0,-1$, and a pair of initial conditions $x^A(0)\in\mathbb{R}$. Fixing the values of momenta $p_i$ hence, defines a 4-parameter subfamily $\eta(p_i)$ of causal geodesics. So, making $\psi(p_0,p_3)=\tau(\kappa-g^{ij}p_ip_j)$ and summing up the above analysis, we find that at least a curve of the subfamily $\eta(p_i)$ will encounter the curvature singularity if, and only if, starting from $n=0$, for any point $q\in\sigma_X$ and any $A=1,2$, there exists a first non-vanishing derivative $\left.\partial_A^{n}\psi(p_0,p_3)\right|_q$ such that either $n$ is odd, or the derivative is positive with $n$ even.

This concludes the proof.
\end{proof}

\subsection{Complete Geodesics Going Through the Singularity}

By now we have established the conditions for which causal geodesics in certain stationary and axially symmetric space-times meet the curvature singularity within them. This unfortunate fate is a common source of affine incompleteness and thus, of possible ill-behavior of causal curves in the space-time. Here, we shall discuss some cases in which reaching the curvature singularity does not necessarily imply a breakdown of completeness. This possibility was previously mentioned for the trivially complete case and now we extend it to the $\tau(x_0^A)=0$ case.

\begin{theorem}

Let $(M, g_{\mu\nu})$ be a space-time with properties 1 to 4 and $\eta'(p'_i)$ a family of causal geodesics that encounter the curvature singularity at a single point $x_0^A$ ($q(x_0^A)\in\sigma_X$). Consider an expansion in power series of $\tau(x^A)$ around the singular point $x_0^A$, and let $\delta_A$ be the order of the leading terms that go as $(x^A-x_0^A)^{\delta_A}$ in the series for each $A$. In a sufficiently small time-like neighborhood $N$ of $\sigma$, the curves of $\eta'(p'_i)$ can be smoothly continued to subsequent values of their corresponding affine parameter after encountering $\sigma$, if for some momenta $p'_i\in\mathbb{R}$ the following holds:

\renewcommand{\labelenumi}{\roman{enumi}}
\begin{enumerate}
\item $\left.\partial_A^{n}\psi(p'_0,p'_3)\right|_q=0$ for $0\leq n<2\delta_A$ and both $A=1,2$, 
\item $\left.\partial_A^n\widetilde{g}^{ij}\right|_qp'_j=0$ for $0\leq n<\delta_A$, both $A=1,2$ and both $i=0,3$.
\end{enumerate}

\end{theorem}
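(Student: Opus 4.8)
The plan is to first narrow down which geodesics can possibly satisfy the conclusion. If the family reaches $\sigma$ at $q(x_0^A)$ with $\tau(x_0^A)=0$ (the case $\tau(x_0^A)\neq0$ being the trivially complete one already disposed of, and corresponding to $\delta_A=0$ with the hypotheses vacuous), then equations (\ref{limxa})--(\ref{limxi}) and the discussion following them show that the coordinate velocities can remain finite only if $\Xi^1(x_0^1)=\Xi^2(x_0^2)=0$; by (\ref{Ra})--(\ref{Xib}) together with $L^{11}(x_0^1),\Theta^{22}(x_0^2)>0$ this is exactly $K=-\alpha=\beta$. Hence $\eta'(p'_i)$ must lie in the case treated in the Remark, with equations of motion (\ref{Xis}), $(\dot x^A)^2=\hat\Xi^A(x^A)/\tau^2$, together with $\dot x^i=\widetilde g^{ij}p'_j/\tau$ and $\tau=f(x^1)+h(x^2)$.

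Next I would translate the two hypotheses into vanishing orders, expanding $\widetilde g^{\mu\nu}$, $\tau$, and hence $\hat\Xi^A$ and $\widetilde g^{ij}p'_j$, in power series about $q$ (legitimate by property 4). Write $F:=f(x^1)-f(x_0^1)$ and $H:=h(x^2)-h(x_0^2)$, so $\tau=F+H$ and, by the definition of $\delta_A$, $F$ vanishes to order exactly $\delta_1$ at $x_0^1$ and $H$ to order exactly $\delta_2$ at $x_0^2$. Since (\ref{crit1n})--(\ref{crit2n}) (equivalently (\ref{critg})) give $c_n^A=\tfrac{1}{n!}L^{AA}\,\partial_A^n\psi|_q$ for $n\ge1$, condition (i) forces $c_n^A=0$ for $1\le n<2\delta_A$, i.e.\ $\hat\Xi^A$ vanishes to order $\ge 2\delta_A$; dividing by $F^2$, resp.\ $H^2$, and using that these vanish to order exactly $2\delta_A$, Hadamard's lemma supplies \emph{smooth} $\mu^A$ with $\hat\Xi^1=F^2\mu^1(x^1)$ and $\hat\Xi^2=H^2\mu^2(x^2)$. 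Likewise, the $n=0$ case of condition (ii) reads $\widetilde g^{ij}|_q p'_j=0$, while the $n\ge1$ cases say that $L^{ij}(x^1)p'_j-L^{ij}(x_0^1)p'_j$ and $\Theta^{ij}(x^2)p'_j-\Theta^{ij}(x_0^2)p'_j$ vanish to order $\ge\delta_1$ and $\ge\delta_2$ respectively; dividing by $F$ and $H$ gives smooth $\nu^i_A$ with $\widetilde g^{ij}p'_j=F\,\nu^i_1(x^1)+H\,\nu^i_2(x^2)$.

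The third step cashes this in. By Lemma 2 the conformal factor is positive definite on $N$; restricting $\tau$ to each coordinate line through $q$ forces $x_0^1$ to be a local minimum of $f$ and $x_0^2$ of $h$, so $F,H\ge0$ on $N$ and $\delta_1,\delta_2$ are even. Substituting the factorisations,
\begin{align*}
(\dot x^1)^2&=(F/\tau)^2\,\mu^1(x^1),\qquad (\dot x^2)^2=(H/\tau)^2\,\mu^2(x^2),\\
\dot x^i&=(F/\tau)\,\nu^i_1(x^1)+(H/\tau)\,\nu^i_2(x^2),
\end{align*}
and since $F/\tau,\,H/\tau\in[0,1]$ while $\mu^A,\nu^i_A$ are continuous near $x_0^A$, the right-hand sides stay bounded as $q$ is approached; as $q$ is reached at a finite affine parameter $\lambda_0$, the $x^i$ are finite there too. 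This is precisely the situation of (\ref{limxa})--(\ref{limxi}) in the trivially complete case, so the curve extends smoothly to $\lambda>\lambda_0$ with the same conserved quantities, which is the assertion. If one wants the limiting velocities themselves, one adds that along the orbit the weight $F/\tau$ converges: from $|dx^1/dx^2|=\sqrt{\hat\Xi^1/\hat\Xi^2}$ and $\hat\Xi^A\sim c^A_{2\delta_A}(x^A-x_0^A)^{2\delta_A}$ one integrates the resulting separable equation to get $|x^1-x_0^1|\sim\mathrm{const}\,|x^2-x_0^2|^{(\delta_2-1)/(\delta_1-1)}$, whence $H/F$ tends to a definite value in $[0,\infty]$.

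The step I expect to be the real obstacle is the last one: $F/\tau$ is \emph{not} continuous at $q$ as a function on the $(x^1,x^2)$-plane, so one cannot merely declare that the right-hand sides above extend continuously. The honest remedy is the orbit analysis just sketched, with the degenerate sub-cases $\dot x^2\equiv0$ or $\dot x^1\equiv0$ (motion confined to a coordinate plane through $q$), $\delta_1=\delta_2$, and a genuine turning point of $x^A$ at $x_0^A$ treated separately; alternatively one may rest, as the paper already does in the trivially complete case, with boundedness of the velocities. A secondary point to verify is that $c^A_{2\delta_A}>0$, so that $\hat\Xi^A\ge0$ near $x_0^A$ and the orbit indeed reaches $q$ — this is forced by the standing hypothesis that $\eta'(p'_i)$ encounters $\sigma$.
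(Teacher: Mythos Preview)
Your proof is correct and aligns with the paper's strategy: both reduce to showing that the coordinate velocities $(\dot x^A)^2=\hat\Xi^A/\tau^2$ and $\dot x^i=\widetilde g^{ij}p'_j/\tau$ stay bounded as $q$ is approached, by using conditions (i) and (ii) to force the numerators to vanish to orders $\ge 2\delta_A$ and $\ge\delta_A$ respectively. The paper implements this via the explicit limit (\ref{limxn}), merely recording that the ratio is bounded (its ``$w\in[0,1]$'' is exactly your $F/\tau,\,H/\tau\in[0,1]$); your Hadamard factorisation $\hat\Xi^A=F^2\mu^A$, $\widetilde g^{ij}p'_j=F\nu^i_1+H\nu^i_2$ is a tidier packaging of the same idea. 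You also go somewhat further than the paper: you correctly flag that $F/\tau$ has no continuous extension at $q$ and sketch an orbit-level remedy via $|dx^1/dx^2|$, whereas the paper's proof stops at boundedness and asserts smooth continuation without addressing this point --- so the ``obstacle'' you identify is present in the published argument as well, not a defect peculiar to yours.
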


\begin{proof}
We now focus on a power series expansion of the function $\tau=\sum_{A,n}d^A_n(x^A-x_0^A)^n$, where we have written explicitly the sum on the $x^A$ coordinates to avoid any sort of confusion. Since $\tau$ is positive definite in a time-like neighborhood $N$, it follows that the leading terms $d_n^A$ of the series are positive with $n$ even. Let $\delta_A$ be the order of the leading term of the $j_A$ functions and define for compactness $\mathcal{B}_A=d_{\delta_A}^A$. Assume for the affine parameter that $x^A\rightarrow x_0^A$ as $\lambda\rightarrow\lambda_0$. Then, consider the following limit

\begin{equation}
\lim_{\lambda\rightarrow\lambda_0}\frac{(x^A-x_0^A)^n}{\left[\mathcal{B}_1(x^1-x_0^1)^{\delta_1}+\mathcal{B}_2(x^2-x_0^2)^{\delta_2}\right]^{n'}}=\begin{cases}
      \text{undetermined} & \text{for } n<n'\delta_A,\\ 
      w/\mathcal{B}_A^{n'} & \text{for } n=n'\delta_A,\\
      0 & \text{for } n>n'\delta_A,
      \end{cases}
\label{limxn}
\end{equation}

with $w\in[0,1]$ and $n,n'\in\mathbb{Z}^+$. In computing these limits we have used the fact that $\mathcal{B}_A>0$ and $\delta_A$ is even.

Examining equation (\ref{limxn}) along with (\ref{Xis}), and since the limit shown there is bounded for $n\geq2\delta_A$, a hint towards possible complete geodesics going through $\sigma$ can be found. Consider a family of causal geodesics $\eta'(p_i)$ with fixed momenta such that for $0\leq n<2\delta_A$ and both $A=1,2$, the function $\left.\partial_A^{n}\psi(p_0,p_3)\right|_q$ vanishes in some singular point $q(x_0^A)\in\sigma_X$. Hence, as a geodesic of $\eta'(p_i)$ approaches $x_0^A$, its coordinate velocities $\dot{x}^A$ will be bounded. We must not forget, though, about the other pair of equations of motion $\dot{x}^i=\widetilde{g}^{ij}p_j/\tau$ which could still cause incompleteness.

For this purpose we perform yet another expansion in power series around the points $x_0^A$. In this last case for $\widetilde{g}^{ij}p_j=\sum_{A,n}b^{Ai}_n(x^A-x_0^A)^n$, where

\begin{equation}
b^{Ai}_n=\frac{1}{n!}\left.\partial_A^n\widetilde{g}^{ij}\right|_qp_j.
\label{bain}
\end{equation}

By similar arguments as those used for the $\dot{x}^A$ equations and using once again the limit shown in (\ref{limxn}) with $n'=1$, it can be seen that if the coefficients $b_n^{Ai}$ vanish for $A=1,2$, $i=0,3$ and $0\leq n<\delta_A$, the coordinate velocities $\dot{x}^i$ will be bounded while approaching the singularity. So, in a sufficiently small time-like neighborhood $N$ of it, the causal geodesics of a family $\eta'(p'_i)$ will encounter the curvature singularity while still being able to be smoothly continued to subsequent values of their affine parameter, if for some momenta $p'_i\in\mathbb{R}$ the following holds:

\renewcommand{\labelenumi}{\roman{enumi}}
\begin{enumerate}
\item $\left.\partial_A^{n}\psi(p'_0,p'_3)\right|_q=0$ for $0\leq n<2\delta_A$ and both $A=1,2$, ,
\item $\left.\partial_A^n\widetilde{g}^{ij}\right|_qp_j=0$ for $0\leq n<\delta_A$, both $A=1,2$ and both $i=0,3$.
\end{enumerate}

This concludes the proof.

\end{proof}

Note that in this theorem the trivially complete case, i.e. $\tau\neq0$ and hence $\delta_A=0$, is included.

Also, the second condition of the theorem can be expressed in an alternative way. The vanishing of $\left.\partial_A^{n}\psi(p_0,p_3)\right|_q$ for $0\leq n<2\delta_A$ and both $A=1,2$, yields the following relation for the momenta $p_0$ and $p_3$,

\begin{equation}
\left[\left.\partial_A^n(\widetilde{g}^{ii}p_i+\widetilde{g}^{ij}p_j)\right|_q\right]^2=\left.\left[-\det(\partial_A^n\widetilde{\gamma}^{-1})p_j^2+\kappa(\partial_A^n\widetilde{g}^{ii})(\partial_A^n\tau)\right]\right|_q,
\label{pipj}
\end{equation}

with $\det(\partial_A^n\widetilde{\gamma}^{-1})=(\partial_A^n\widetilde{g}^{00})(\partial_A^n\widetilde{g}^{33})-(\partial_A^n\widetilde{g}^{03})^2$. Warning: in equation (\ref{pipj}) we have temporarily abandoned the summation convention for repeated indices, the intended use of this expression is for fixed values $i,j=0,3$ and $i\neq j$. This liberty is taken only in this equation and in the following one. This particular form of the equation is used due to it being easily substituted in (\ref{bain}), obtaining thus

\begin{equation}
b^{Ai}_n=\left.\pm\frac{1}{n!}\sqrt{-\det(\partial_A^n\widetilde{\gamma}^{-1})p_j^2+\kappa(\partial_A^n\widetilde{g}^{ii})(\partial_A^n\tau)}\right|_q.
\label{bain2}
\end{equation}

The advantage of using (\ref{bain2}) over (\ref{bain}) lies in the reduction of one free parameter in the equation (one of the constants of motion), despite this, the latter is way more compact.

The result of theorem 2 states the sufficient conditions for which causal geodesics may be continued to future values of its affine parameter after meeting a point of the curvature singularity. Nevertheless, if the space-time does not contain those future points, said curve would still disappear off the manifold in a finite amount of said parameter, and consequently, be incomplete. To avoid incompleteness, at least in the neighborhood of the singularity, an additional requirement may be that of the existence of a space-time extension containing the future points of the geodesic. 

\begin{corollary}
Let $(M, g_{\mu\nu})$ be a space-time with properties 1 to 4, and let $\xi(\lambda)\in\eta'(p'_i)$ be a curve with affine parameter $\lambda$ that encounters a single singular point $q(x_0^A)\in\sigma_X$ at a parameter value $\lambda=\lambda_0$. The curve $\xi$ will not become incomplete in a sufficiently small time-like neighborhood $N$ of $\sigma$ if conditions i and ii are satisfied, and provided there exists a suitable extension of $M$ that contains the future points of $\xi(\lambda_0+\epsilon)$ for small $\epsilon$.
\end{corollary}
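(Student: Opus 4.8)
The plan is to combine Theorem 2 with a standard extension argument. By Theorem 2, once conditions i and ii hold for the momenta $p'_i$, the coordinate velocities $\dot{x}^\mu$ of $\xi$ all remain bounded as $x^A\to x_0^A$, i.e. as $\lambda\to\lambda_0$. First I would use this boundedness, together with the geodesic equations, to argue that not only the $\dot{x}^\mu$ but also the accelerations $\ddot{x}^\mu$ stay finite in the limit: the Christoffel symbols built from $\widetilde{g}^{\mu\nu}$ and $\tau$ are smooth on the auxiliary manifold $\tilde{M}$ near $\sigma_X$ by property 4, and in the constrained situation treated in Theorem 2 the potentially singular factors $1/\tau$ are compensated by the vanishing of the relevant power-series coefficients exactly as in the limit (\ref{limxn}). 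Hence $x^\mu(\lambda)$ and $\dot{x}^\mu(\lambda)$ admit finite limits as $\lambda\to\lambda_0^-$; in particular $\xi$ reaches the point $q$ with a well-defined tangent vector $\dot{x}^\mu(\lambda_0)$.

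Next I would invoke the hypothesis that a suitable extension $M'\supseteq M$ exists containing the points $\xi(\lambda_0+\epsilon)$ for small $\epsilon>0$. The natural candidate is (a neighborhood in) $\tilde{M}$ itself, or a further smooth extension across the locus $\tau=0$; in any case, on $M'$ the metric (or an extension of it obtained by continuing $\widetilde{g}^{\mu\nu}/\tau$) is smooth in a full neighborhood of $q$ in the relevant directions, so the geodesic equation is a regular ODE system there. With the finite Cauchy data $\bigl(x^\mu(\lambda_0),\dot{x}^\mu(\lambda_0)\bigr)$ at hand, the Picard--Lindel\"of theorem gives a unique geodesic of $M'$ through $q$ with that initial tangent, defined on an interval $(\lambda_0-\epsilon,\lambda_0+\epsilon)$. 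By uniqueness this curve agrees with $\xi$ on $(\lambda_0-\epsilon,\lambda_0)$, so it is the desired smooth continuation of $\xi$ past $\lambda_0$. Therefore $\xi$ does not terminate at $\lambda_0$, and in a sufficiently small time-like neighborhood $N$ of $\sigma$ it cannot run out of manifold in finite affine parameter, i.e. it is not incomplete within $N$.

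The main obstacle I expect is the passage from "bounded coordinate velocities" to "a genuine $C^1$ (indeed $C^\infty$) geodesic extension through $q$": one must make sure that the limiting tangent vector $\dot{x}^\mu(\lambda_0)$ is the same regardless of how $\lambda\to\lambda_0$, and that it is compatible with the (extended) metric at $q$ so that the ODE initial-value problem is well posed. This is precisely where conditions i and ii do the work --- condition i kills the $1/\tau^2$ blow-up in the $\dot{x}^A$ equations up to order $2\delta_A$, condition ii kills the $1/\tau$ blow-up in the $\dot{x}^i$ equations up to order $\delta_A$, and the limit evaluation (\ref{limxn}) shows the ratios converge to definite values independent of the approach direction. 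A secondary subtlety is that the statement only claims completeness \emph{in $N$}: one should emphasize that global completeness is not asserted, only that the specific failure mechanism (the curve leaving $M$ at $\sigma$) is removed, so the argument is purely local around $q$ and uses smallness of $N$ to guarantee the ODE solution stays within the chart where $\widetilde{g}^{\mu\nu},\tau$ are smooth. Everything else is a routine application of existence-uniqueness for geodesic flows.
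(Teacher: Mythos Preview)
The paper offers no separate proof of this corollary: it is stated as an immediate consequence of Theorem~2, followed only by the one-sentence gloss that $\xi$ ``enters the time-like neighborhood $N$ and can then be extended to large enough values of its affine parameter $\lambda$ until the curve either leaves $N$, or stays in it for all subsequent values $\lambda>\lambda_0$.'' Since Theorem~2 already asserts that the curves of $\eta'(p'_i)$ ``can be smoothly continued to subsequent values of their affine parameter after encountering $\sigma$,'' the corollary merely appends the obvious proviso that the manifold must contain somewhere for that continuation to live.

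Your argument is in the same spirit but considerably more elaborate, and one step is shakier than you acknowledge. Invoking Picard--Lindel\"of for the second-order geodesic ODE at $q$ requires the right-hand side $-\Gamma^\mu_{\alpha\beta}\dot{x}^\alpha\dot{x}^\beta$ to be continuous (indeed Lipschitz) in a full neighborhood of the initial data; but the Christoffel symbols of $g$ are \emph{not} smooth at $q$ --- there is a genuine curvature singularity there, and property~4 only guarantees smoothness of $\widetilde{g}^{\mu\nu}$ and $\tau$, not of $g_{\mu\nu}$ or its connection. Your claim that the $1/\tau$ factors are ``compensated'' along the particular trajectory is precisely what Theorem~2 establishes, but it does so through the \emph{separated first-order} equations $(\tau\dot{x}^A)^2=\Xi^A(x^A)$ and $\dot{x}^i=\widetilde{g}^{ij}p_j/\tau$, not through the second-order system. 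The cleaner route --- and effectively the paper's --- is to stay with those separated equations: they already determine $x^\mu(\lambda)$ past $\lambda_0$ once the extension supplies the target points, so no appeal to general ODE existence theory at a singular point is needed. Your overall logic (bounded velocities from Theorem~2, then use the extension to house the future of $\xi$) is correct; just drop the Picard--Lindel\"of detour and cite Theorem~2 directly for the smooth continuation.
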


This means that, given the mentioned conditions, $\xi$ enters the time-like neighborhood $N$ and can then be extended to large enough values of its affine parameter $\lambda$ until the curve either leaves $N$, or stays in it for all subsequent values $\lambda>\lambda_0$. It could also happen that the curve remains within the neighborhood for all $\lambda\in\mathbb{R}$.

\section{III. The Pleba\'nski-Demia\'nski Class of Space-times}

In this section we apply our results to the Pleba\'nski-Demia\'nski class of space-times \cite{plebanski}. This class consists of solutions to the Einstein-Maxwell field equations with a generally non-zero cosmological constant. The physically relevant space-times of this class describe black holes with up to seven parameters, namely:
\begin{itemize}
\item The mass $m$ of the black hole
\item Angular momentum per unit mass $a$
\item Electric charge $Q$
\item The Taub-NUT parameter $l$
\item The Manko-Ruiz constant $C$ which is only relevant when $l\neq0$
\item The acceleration of the black hole $\mathcal{A}$
\item Cosmological constant $\Lambda$
\end{itemize}

We will limit ourselves to asymptotically flat space-times, these are only possible if $\Lambda=\mathcal{A}=0$. The reason for this is that, in the general case, there can only exist at most a conformal Killing tensor for metrics with $\Lambda,\mathcal{A}\neq0$, allowing the integrability of the equations of motion for null geodesics only. This falls beyond the scope of the derived theorems. In Boyern-Lindquist coordinates, the line element of the asymptotically flat metrics is then given by \cite{griffiths}

\begin{equation}
ds^2=-\frac{1}{\Sigma}\left[(\Delta_r-a^2\sin^2\theta)dt^2+(\Delta_r\chi-a(\Sigma+a\chi)\sin^2\theta)dtd\varphi+\left((\Sigma+a\chi)^2\sin^2\theta-\Delta_r\chi^2\right)d\varphi^2\right]+\frac{\Sigma}{\Delta_r}dr^2+\Sigma d\theta^2,
\label{pd}
\end{equation}

where $\Sigma=r^2+(l+a\cos\theta)^2$, $\chi=a\sin^2\theta-2l(\cos\theta+C)$ and $\Delta_r=r^2-2mr+a^2-l^2+Q^2$. By setting the appropriate parameters to zero in (\ref{pd}) we can obtain some thoroughly studied metrics such as Kerr, Reissner-Nordström, Schwarzschild, etc.  The event horizons of the black hole are located at the roots of $\Delta_r$, which yield $r_{1,2}=m\pm\sqrt{m^2-(a^2-l^2+Q^2)}$. These space-times contain a ring singularity when $\Sigma=0$, i.e. $r=0$ and $\cos\theta=-l/a$. The singular curvature set for this class of space-times can be expressed using the Cartesian-like Kerr-Schild coordinates\footnote{This set of coordinates is particularly helpful in realizing that the singularity of this type of black holes is indeed a ring \cite{KerrSchild}.} for which $v=\sqrt{r^2+a^2}\sin\theta$ and $u^3=r\cos\theta$ in equation (\ref{scs}). Since the singularity is defined by a single pair of points, $r=0$ and $\cos\theta=-l/a$, then $\sigma_X$ can be written as $$\sigma_X=\left\{\left(t,v_0\cos\varphi,v_0\sin\varphi,0\right)\mid-\infty<t<\infty, 0\leq\varphi<2\pi\right\},$$ where $v_0=\sqrt{a^2-l^2}$. If $a^2-l^2+Q^2>m^2$, the event horizons do not exist and the ring singularity is, in principle, left visible to any asymptotically distant observer. Also, note that the curvature singularity will only exist if $\left|a\right|>\left|l\right|$. This inequality is consistent with the no event horizon condition. The ring singularity will always be time-like, except for the case $a=l=Q=0$, which reduces (\ref{pd}) to the Schwarzschild metric. The Schwarzschild singularity $r=0$ is known to be space-like and hence, once a particle crosses the event horizon, the singularity will unavoidably appear in the future of its world-line. 

It is simple to compute the inverse metric $g^{\mu\nu}$ of this class of space-times and express it in the separated form

\begin{equation}
g^{\mu\nu}=\frac{\mathcal{R}^{\mu\nu}(r)+\Theta^{\mu\nu}(\theta)}{f(r)+h(\theta)},
\label{ginvPD}
\end{equation}

explicitly we have, 

\begin{eqnarray}
\mathcal{R}^{\mu \nu}=
\begin{bmatrix}
	-(\Sigma+a\chi)^2/\Delta_r & 0 & 0 & -a(\Sigma+a\chi)/\Delta_r \\
	0 & \Delta_r & 0 & 0 \\
	0 & 0 & 0 & 0 \\
	-a(\Sigma+a\chi)/\Delta_r & 0 & 0 & -a^2/\Delta_r \\ 	
\end{bmatrix}, \hspace{0.5cm}
\Theta^{\mu \nu}=
\begin{bmatrix}
	\chi^2/\sin^2\theta & 0 & 0 & \chi/\sin^2\theta \\
	0 & 0 & 0 & 0 \\
	0 & 0 & 1 & 0 \\
	\chi/\sin^2\theta & 0 & 0 & -1/\sin^2\theta \\ 	
\end{bmatrix},
\label{RThuv}
\end{eqnarray}

with $f(r)=r^2$ and $h(\theta)=(l+a\cos\theta)^2$. Observe that the sum $\Sigma+a\chi$ depends only on the coordinate $r$.

One of the most important features of this class of solutions is that they are algebraically special in the Petrov classification of space-times. In particular, they are of type D, meaning they possess two principal null directions which are repeated twice \cite{stephani}. This implies the existence of a second rank Killing tensor and thus, the integrability of the equations of motion in the space-time \cite{killing}. The Killing tensor has the structure of equation (\ref{Killing}).

Since the Pleba\'nski-Demia\'nski class of space-times satisfies the properties stated in theorem 1, we can check if its causal geodesics reach the ring singularity. Furthermore, if they do reach it, we can establish by using theorem 2 if they remain complete as they pass through it. Evaluating $\psi=\Sigma(\kappa-g^{ij}p_ip_j)$ in the ring singularity, we obtain

\begin{equation}
\left.\psi\right|_q=-\frac{Q^2\left((a^2-2alC+l^2)\mathcal{E}-aL_z\right)^2}{(a^2-l^2)(a^2-l^2+Q^2)},
\label{disPD}
\end{equation}

where we have used $p_0=-\mathcal{E}$ and $p_3=L_z$. At this point, it shall be useful to consider two different cases of these class of space-times as they will possess some unique interesting properties.

\subsubsection{Case $Q\neq0$}

For a non-vanishing electric charge, clearly $\left.\psi\right|_q$ is strictly negative and so, there will be $\sigma$-avoidance for geodesics of arbitrary values of energy and angular momentum. Nonetheless, consider the particular case $(a^2-2alC+l^2)\mathcal{E}=aL_z$ which makes $\left.\psi\right|_q$ zero. We therefore have to analyze the first non-vanishing derivatives of $\psi$. It can be seen that $\left.\left(\partial_r\psi\right)\right|_q=\left.\left(\partial_\theta\psi\right)\right|_q=0$ when the previously mentioned condition on energy and angular momentum holds. For the second derivatives we have

\begin{equation}
\left.\left(\partial_r^2\psi\right)\right|_q=\frac{1}{a^2-l^2}\left.\left(\partial_\theta^2\psi\right)\right|_q=2\kappa.
\label{dis2DP}
\end{equation}

From (\ref{dis2DP}) it can be concluded that time-like geodesics ($\kappa=-1$) do not reach the singularity in asymptotically flat space-times of the Pleba\'nski-Demia\'nski class. This is in full agreement with one of the results from \cite{geroch}, where the singular region of a Reissner-Nordström black hole is shown to be physically inaccessible to time-like curves of limited acceleration. However, for null geodesics the second derivatives reduce to zero once again. The same goes for the third derivatives $\left.\left(\partial_r^3\psi\right)\right|_q=\left.\left(\partial_\theta^3\psi\right)\right|_q=0$. It is until the fourth derivation that a non-vanishing constant can be found,

\begin{equation}
(a^2-l^2+Q^2)\left.\left(\partial_r^4\psi\right)\right|_q=-\frac{1}{a^2-l^2}\left.\left(\partial_\theta^4\psi\right)\right|_q=4!\mathcal{E}^2.
\label{dis4DP}
\end{equation}

Hence, there will be $\sigma$-encountering for light-like geodesics of the asymptotically flat ($\mathcal{A}=\Lambda=0$) Pleba\'nski-Demia\'nski class of space-times. The equations of motion for this case are simply

\begin{equation}
(\Sigma\dot{r})^2=\mathcal{E}^2r^4, \hspace{0.5cm} (\Sigma\dot{\theta})^2=-\mathcal{E}^2(l+a\cos\theta)^4/a^2.
\label{rth}
\end{equation}

It can be easily seen that motion is only possible within the plane $\cos\theta=-l/a$. A null observer constrained to this plane, and approaching from infinity, will eventually touch the ring singularity. For this to be the case, specific conditions on the constants of motion have to be met, namely $(a^2-2alC+l^2)\mathcal{E}=aL_z$, $\kappa=K=0$, which define a family $\eta(\mathcal{E},L_z,K=0)$ of null geodesics. Since the curves of this family are constrained to the discussed plane, $\eta$ is a one-parametric family of null geodesics, being the initial condition $r(0)$ the only degree of freedom. These curves are the principal null rays, the rest of the causal geodesics will never reach the singularity. It is worth pointing out that the $\sigma$-encountering of the principal null rays of the Kerr metric was already mentioned in \cite{boyer}. 

\subsubsection{Case $Q=0$}

Focusing now on the metrics (\ref{pd}) without electric charge, it can be seen in (\ref{disPD}) that $\left.\psi\right|_q$ is zero without imposing restrictions on the conserved quantities $\mathcal{E}$ and $L_z$. In this case the first non-vanishing derivatives are

\begin{equation}
\frac{a^2-l^2}{2m}\left.\left(\partial_r\psi\right)\right|_q=\frac{\sqrt{a^2-l^2}}{2l}\left.\left(\partial_\theta\psi\right)\right|_q=\frac{\left((a^2-2alC+l^2)\mathcal{E}-aL_z\right)^2}{a^2-l^2}.
\label{dis1Q0}
\end{equation}

These are derivatives of odd order and thus, there are causal geodesics that unfortunately do reach the ring singularity. Assuming a positive mass, as is the case for physical black holes, these geodesics are the ones coming (going) from (to) $r>0$ and $\theta>\arccos(-l/a)$. Hence, they remain within the inner region of the black hole ($0\leq r<m+\sqrt{m^2-a^2+l^2}$) with no chance of escaping to the domain of outer communications. For the region of negative values of $r$, which is obtained by performing a maximal analytic extension of the space-time manifold for metrics with $a\neq0$ or $l\neq0$ (similar to that of \cite{boyer}), we have the opposite situation and causal geodesics will actually be repelled from the singularity.

One may still try to obtain causal geodesics that meet $\sigma$ coming from either side of it by setting the conjugate momenta to $(a^2-2alC+l^2)\mathcal{E}=aL_z$. This would result in the vanishing of the first derivatives in (\ref{dis1Q0}), leading then to the same results shown in equations (\ref{dis2DP}) and (\ref{dis4DP}), but with $Q=0$.

\subsection{Completeness of Null Geodesics Encountering the Singularity}

We have already shown that there indeed exists a family of null geodesics which meet the singularity in the Plebanski-Demianski class of space-times. In what follows we wonder if those curves can yet remain complete despite this fact. For this purpose we apply theorem 2.

Consider the family $\eta(\mathcal{E},L_z,K=0)$ of null geodesics with $(a^2-2alC+l^2)\mathcal{E}=aL_z$. In the above calculations we saw that for this family, the first non-vanishing derivatives of $\left.\partial_A^{n}\psi\right|_q$ were those of order $n=4$ for both coordinates $r$ and $\theta$. Since $\delta_r=\delta_\theta=2$ because $\Sigma=r^2+(l+a\cos\theta)^2$, condition i of theorem 2 is satisfied. To verify condition ii, all that is left to be done is to compute the $\left.\partial_A^n\widetilde{g}^{ij}\right|_qp_j$ derivatives. An evaluation of these expressions yields

\begin{eqnarray}
\frac{1}{a^2-2alC+l^2}\left.\widetilde{g}^{0j}\right|_qp_j&=&\frac{1}{a}\left.\widetilde{g}^{3j}\right|_qp_j=-\frac{Q^2\left((a^2-2alC+l^2)\mathcal{E}-aL_z\right)}{(a^2-l^2)(a^2+Q^2-l^2)} \nonumber\\
\frac{1}{a^2-2alC+l^2}\left.\partial_r\widetilde{g}^{0j}\right|_qp_j&=&\frac{1}{a}\left.\partial_r\widetilde{g}^{3j}\right|_qp_j=\frac{2m\left((a^2-2alC+l^2)\mathcal{E}-aL_z\right)}{(a^2+Q^2-l^2)^2} \nonumber\\
\frac{1}{a^2-2alC+l^2}\left.\partial_\theta\widetilde{g}^{0j}\right|_qp_j&=&\frac{1}{a}\left.\partial_\theta\widetilde{g}^{3j}\right|_qp_j=-\frac{2l\left((a^2-2alC+l^2)\mathcal{E}-aL_z\right)}{(a^2-l^2)^{3/2}}.
\label{gijpj}
\end{eqnarray}

Condition ii is clearly satisfied too for the family $\eta(\mathcal{E},L_z,K=0)$ of null geodesics. Hence, these curves can be smoothly continued to future values of its affine parameter after touching the ring singularity. Indeed, from the equations of motion (\ref{rth}) and the fact that these causal geodesics are constrained to the plane $\cos\theta=-l/a$, we have for the radial coordinate velocity $\dot{r}=\pm\mathcal{E}$, which describes ingoing and outgoing radial null rays. This equation is easily integrated and shows that a geodesic coming from positive (negative) $r$ touches the curvature singularity $\sigma$, and then continues its trajectory to negative (positive) $r$. Hence, despite there being $\sigma$-encountering, the ring singularity does not lead to incomplete curves for the case of the null geodesics of the family $\eta$. It is remarkable that the tangents of these geodesics are aligned with the principal null directions $k^\mu_\pm$ of the space-time, namely

\begin{equation}
k^\mu_\pm=\left[(r^2+a^2-2alC+l^2)/\Delta_r,\pm1,0,a/\Delta_r\right],
\label{pnd}
\end{equation}

expressed in the $\{t,r,\theta,\varphi\}$ basis of Boyer-Lindquist coordinates. 

It is worth now summarizing the results here obtained for each case. In an asymptotically flat space-time with non-vanishing electric charge of the Pleba\'nski-Demia\'nski class, there will be in general $\sigma$-avoidance for time-like geodesics, while the principal null rays shall indeed meet the ring singularity. On the other hand, for metrics with $Q=0$, we found that there can be $\sigma$-encountering for causal geodesics with arbitrary values of momenta coming from $r>0$. These curves will reach the ring singularity in a finite amount of its affine parameter and then disappear off the manifold, the reverse situation is possible as well, that is, an observer could suddenly appear in the singularity and then follow its way into the manifold. Nevertheless, in both cases of electric charge value and $a\neq0$, the principal null rays meet the singularity and continue their paths into the region of opposite sign of radial coordinate values. In other words, the singularity does not induce incompleteness in those curves as they can be extended to future values of their affine parameter after encountering it.

Finally, causal geodesic completeness may be confirmed for a particular metric of the class of space-times studied in this section, namely the Kerr-Newman metric. This is due to the fact that its maximal analytic extension is known to include negative values of $r$, and also that it does not possess any other kind of singularity within its space-time manifold. Both of these properties are also shared with the Kerr black hole, but since $Q=0$ for this case, there are incomplete causal geodesics as mentioned before. Unfortunately, in space-times with $l\neq0$, e.g. Kerr-NUT, a conical singularity in the symmetry axis is formed which can provoke geodesic incompleteness, this falls beyond the scope of our paper. In this cases, our analysis can at most tell that geodesics do not become incomplete inside a sufficiently small neighborhood of the ring singularity. The Reissner-Nordström metric can be discarded as well from being geodesically complete as it cannot be extended to negative $r$.

\section{IV. A Geodesically Complete Wormhole Space-time}

In the previous section we analyzed a physically relevant class of space-times and found that the singularity lies in the path of some null geodesics with specific constraints on the constants of motion. It might be natural now to wonder if a space-time whose singularity is inaccessible for both, null and time-like geodesics, does actually exist. In this spirit, we propose an example of such a case in what follows.

The line element is roughly based on the axially symmetrical wormhole (WH) space-times (specifically the so-called ring WHs) found in \cite{DelAguila,Miranda}, but with the element $g^{tt}$ of the inverse metric tensor modified so that the property $\left.\psi\right|_q<0$ holds for any conserved quantity $p_i$. Evidently, the construction of this metric is guided by geometrical arguments rather than physical significance. Hence, the gravitational source that could produce such a space-time geometry could not bare any physical relevance. We use spheroidal oblate coordinates $x$, $y$ to express the corresponding line element,

\begin{equation}
ds^2=-\frac{\Delta^2}{\Delta_s}dt^2+L^2\frac{\Delta}{\Delta_1}dx^2+\frac{\Delta}{1-y^2}dy^2-2aLx(1-y^2)\frac{\Delta}{\Delta_s}dtd\varphi+(1-y^2)(\Delta\Delta_1-a^2)\frac{\Delta}{\Delta_s}d\varphi^2,
\label{ds2WH3}
\end{equation}

where we have defined $\Delta=L^2(x^2+y^2)$, $\Delta_1=L^2(x^2+1)$ and $\Delta_s=\Delta^2-a^2y^2$. This set of coordinates is related to those of Boyer-Lindquist through $Lx=r-r_1$ and $y=\cos\theta$. Also, $L$ is defined as $L^2=r_0^2-r_1^2$ with $r_0$ and $r_1$ being constant length parameters, and $a$ a parameter with units of angular momentum. The inverse metric $g^{\mu\nu}$ has the more compact expression $g^{\mu\nu}=[\mathcal{X}^{\mu\nu}(x)+\mathcal{Y}^{\mu\nu}(y)]/[f(x)+h(y)]$. These functions are given by $f(x)=L^2x^2$ and $h(y)=L^2y^2$, while the tensors $\mathcal{X}^{\mu \nu}$ and $\mathcal{Y}^{\mu \nu}$ by

\begin{eqnarray}
\mathcal{X}^{\mu \nu}&=&
\begin{bmatrix}
	-L^2x^2+a^2/\Delta_1 & 0 & 0 & -aLx/\Delta_1 \\
	0 & \Delta_1/L^2 & 0 & 0 \\
	0 & 0 & 0 & 0 \\
	-aLx/\Delta_1 & 0 & 0 & -L^2/\Delta_1 \\ 	
\end{bmatrix}, \nonumber \\
\mathcal{Y}^{\mu \nu}&=&diag\left[-L^2y^2,0,1-y^2,\frac{1}{1-y^2}\right].
\label{XYuv}
\end{eqnarray}

We now present the embedding profiles of the metric in three-dimensional Euclidean space. These show that the line element (\ref{ds2WH3}) has indeed a wormhole geometry whose throat is a disc of radius $L$ located at $x=0$. The throat connects two different universes (or possibly distant regions of the same universe), one with $x>0$ and another with $x<0$.

\begin{figure}[htp]
	\centering
		\includegraphics[scale=0.8]{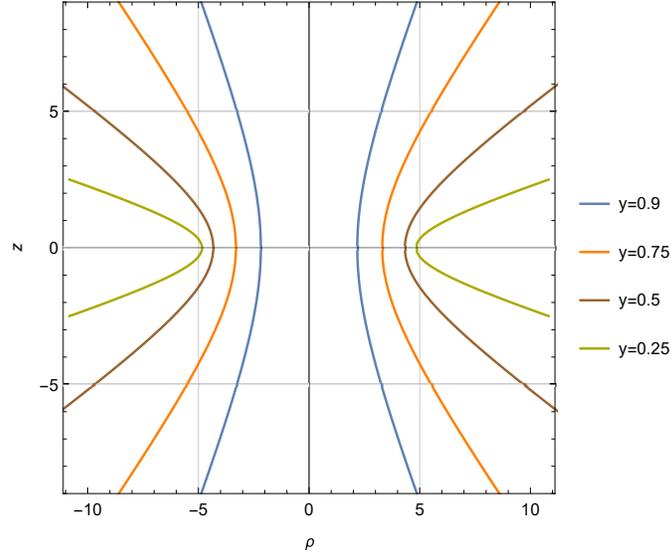} 
		\caption{Embedding diagram of the WH in three-dimensional Euclidean space for different constant values of $y$ with $a=0.1$ and $L=5$. Here, $z$ and $\rho$ are the usual cylindrical coordinates.}
	\label{fig:geo}
\end{figure}

The singular regions of this metric can be revealed by computing the Ricci scalar, which is

\begin{equation}
R=\frac{-3a^2L^2(3a^2y^2+\Delta^2)(y^2(1-y^2)+x^2(1+3y^2))}{2\Delta^2\Delta_s^2}.
\label{RicciWH}
\end{equation} 

From (\ref{RicciWH}) we notice that $y=1$ is nothing but a coordinate singularity due to the choice of our spheroidal coordinates. The root $\Delta=0$ corresponds to a ring singularity $\sigma$ ($x=y=0$) similar to that of the Pleba\'nski-Demia\'nski class of space-times. While $\Delta_s=0$ yields an additional singularity $\sigma'$ with very interesting properties that shall be discussed later in this section. For the time being, we focus our attention on the ring singularity. 

We proceed by calculating the quantity $\psi=\Delta(\kappa-g^{ij}p_ip_j)$ in $x=y=0$, thus finding

\begin{equation}
\left.\psi\right|_q=-(a\mathcal{E}/L)^2.
\label{disWH}
\end{equation}  

For non-zero values of energy, any observer traveling in geodesic motion will be repelled from the ring singularity. Following the same procedure as in the Pleba\'nski-Demia\'nski case, $\left.\psi\right|_q$ can be set to zero, which implies $\mathcal{E}=0$. Then, the derivatives $\left.\left(\partial^n_A\psi\right)\right|_q$ for $A=1,2$ will determine if the ring singularity is accessible. We obtain for the lowest order non-vanishing derivative the following expression

\begin{equation}
\left.\left(\partial_x^2\psi\right)\right|_q=\left.\left(\partial_y^2\psi\right)\right|_q=2(L^2\kappa-\mathcal{L}^2).
\label{dis2WH}
\end{equation}

This, yet again, is a negative quantity for time-like geodesics of arbitrary angular momentum $p_3=\mathcal{L}$. For null geodesics with zero angular momentum, (\ref{dis2WH}) vanishes. However, with said restriction, we have set all of the constants of motion to zero and hence reducing the motion of the particle to a trivial case. That is, the particle remains at a constant set of coordinates, including time itself. Motion such as this can be considered as unphysical behavior. Thus, we can conclude that no causal geodesic can reach the ring singularity $\sigma$ in this space-time.

At this point we should not forget about the remaining singularity $\sigma'$ which occurs when $\Delta_s=0$. This equation can be rearranged to a more familiar form $x^2+(y\pm a/2L^2)^2=a^2/4L^4$, i.e. $\sigma'$ is described by two circles in the $x-y$ plane. We can write the singular curvature set for this singularity as $\sigma'_X=\sigma'_+\cup\sigma'_-$, where

\begin{equation}
\sigma'_\pm=\left\{\left(t,v\cos\varphi,v\sin\varphi,u^3\right)\mid-\infty<t<\infty,\,0\leq\varphi<2\pi,\,x^2+(y\pm a/2L^2)^2=a^2/4L^4\right\},
\label{sigmapm}
\end{equation}

with $v=L\sqrt{(x^2+1)(1-y^2)}$ and $u^3=Lxy$. Interestingly enough, this singular curvature set contains that of the ring singularity such that, $\sigma_X=\sigma'_+\cap\sigma'_-$. By changing back to Boyer-Lindquist coordinates, and then to the Cartesian-like coordinates $\left\{u_1,u_2,u_3\right\}$, we can correctly visualize the shape of the singularity. Before doing so though, and since $y\in[-1,1]$, one can realize by examining (\ref{sigmapm}) that $a/L^2=1$ is a limiting case for the topology of $\sigma'_\pm$. For values $a/L^2>1$, we will obtain two closed line segments in the $x-y$ plane, rather than the previously described pair of circles that occur only when $a/L^2\leq1$. As a result, depending on the parameter $a/L^2$ the singular curvature set $\sigma'_X$ can have different geometrical properties. Furthermore, taking into account the azimuthal symmetry of the metric, we have that 

\begin{equation}
\sigma'_\pm\cong\begin{cases} 
      S^1\times S^1\times\mathbb{R} & \text{if } a/L^2\leq1, \\
      S^2\times\mathbb{R} & \text{if } a/L^2>1.
      \end{cases}
\label{sing}
\end{equation}

In the first case of (\ref{sing}), the ``spatial part'' of the singular curvature set $\sigma'_X$ is homeomorphic to two tori (one for each universe) which intersect at the ring singularity and at some other point of the throat. For the case $a/L^2>1$,  this singularity consists of two deformed two-spheres that completely surround the throat of the WH, making it impossible for any test particle to cross it. This is shown in Fig. \ref{fig:singularidad}.

\begin{figure}[htp]
	\centering
		\includegraphics[scale=1.1]{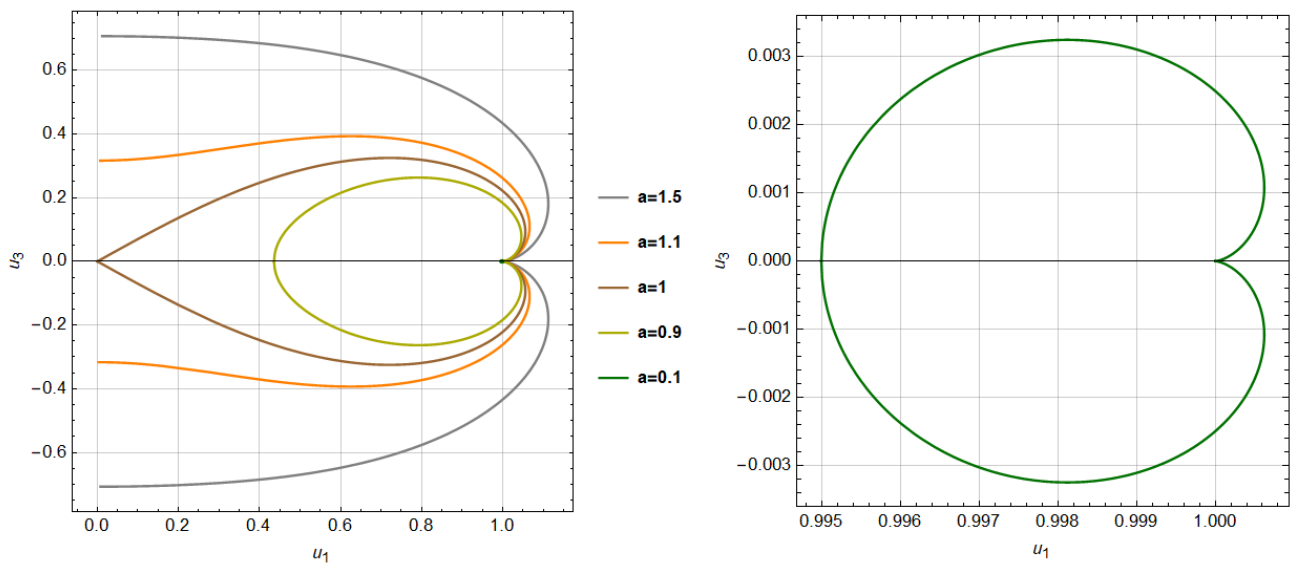} 
		\caption{Cross section of the singularity $\sigma'$ in the plane $u_1$-$u_3$ for different values of $a/L^2$, for simplicity we have made $L=1$. Here, $u_3$ is the symmetry axis. In the right panel we show a particular case where $a=0.1$, and hence $a/L^2\ll1$.}
	\label{fig:singularidad}
\end{figure}

Unfortunately, some complications arise from the structure of this singularity. Its topology is no longer $\Sigma^1\times\mathbb{R}$, instead it is $\Sigma^2\times\mathbb{R}$, i.e. a two-manifold $\times$ ``time". From the previous argumentation, we already know that there is $\sigma$-avoidance for causal geodesics, but what about the rest of the points of $\sigma'_X$? Note that the size of the singularity depends on the unit-less space-time parameter $a/L^2$. So, for $a/L^2\geq1$ there will always exist points $q'(x_s,y_s)\in\sigma'_X$ far from the ring singularity where $\left.\psi\right|_{q'}\geq0$. Therefore, by virtue of theorem 1, geodesic curves can meet the singularity $\sigma'$. Nevertheless, restricting the parameter to $a/L^2\ll1$ which corresponds to a slowly rotating wormhole, the region of the singularity $\sigma'$ is shrunk to a small neighborhood of $\sigma$ (see right panel of Fig. \ref{fig:singularidad}). A neighborhood of that sort has already been proven to be inaccessible to observers in geodesic motion. Additionally, it can be seen that the singularity $\sigma$ is the only possible source of affine incompleteness since, substituting the tensors (\ref{XYuv}) in (\ref{Xi}), the equations of motion show no ill-behavior for other points of the space-time\footnote{The divergence of the tensor $\mathcal{Y}^{\mu\nu}$ for $y=1$ is a consequence of the spheroidal coordinates here used and can be eliminated through a suitable change of coordinate system, e.g. the Cartesian-like coordinates $\left\{u_1,u_2,u_3\right\}$ mentioned previously in this section.}. Thus, and because the coordinate system $\left\{t,x,y,\varphi\right\}$ covers completely both universes, metric (\ref{ds2WH3}) describes a geodesically complete space-time, both for null and time-like curves, only for parameter values $a/L^2\ll1$. 

Despite the absence of an event horizon in this metric, the curvature singularities of the space-time cannot be observed by test particles in free-fall through the WH. As a passing note we point out that the Killing vector $X_0=\partial/\partial t$ becomes space-like inside the compact hyper-surface defined by the singularity $\sigma'$, while outside of it is time-like as expected in an asymptotically flat space-time.

\section{V. Conclusions}
We have formulated a series of criteria regarding causal geodesics and curvature singularities in stationary and axially symmetric space-times with a quadratic first integral. The criteria were stated in two theorems. The first one establishes the sufficient and necessary conditions for which time-like and null geodesics in such space-times can meet the singularity. The second one determines sufficient conditions for the existence of complete causal geodesics that encounter the singularity. Afterward, the theorems were applied to the asymptotically flat class of Pleba\'nski-Demia\'nski metrics which physically describe general black holes, and geometrically belong to the type D algebraic classification of space-times. It was found that in the electrically charged space-times of that class, the singularity is only reached by null geodesics with a specific relation of energy and angular momentum. These curves correspond to the principal null rays of the metric and do not become incomplete, at least, within a sufficiently small neighborhood of the curvature singularity. This last feature is shared with the Pleba\'nski-Demia\'nski space-times without electric charge too. On the other hand, time-like geodesics are repelled from the singularity in the electrically charged metrics, while in the uncharged ones there are incomplete time-like and null geodesics in contact with it. Finally, based on the derived theorems, we presented an example of a causal geodesically complete space-time that has a wormhole geometry with unbounded curvature. 

Unfortunately, as our results rely heavily on the separability of the equations of motion, a generalization to any axially symmetric line element seems unlikely through this approach. Naturally, these theorems can be also used for stationary space-times with spherical symmetry. \\

 \textbf{Acknowledgments.} This work was partially supported by CONACyT M\'exico under grants CB-2011 No. 166212, CB-2014-01 No. 240512, Project
No. 269652, Fronteras Project 281, and grant No. I0101/131/07 C-234/07 of the Instituto Avanzado de Cosmolog\'ia (IAC) collaboration (http://www.iac.edu.mx/). J.C.A. acknowledges financial support from CONACyT doctoral fellowships too. \\

\section{References}

\bibliography{bibliografia}

\end{document}